\authorrunning{C. Faggian, G. Lopez, B.  Valiron}
\newcommand{\Appendix}{Appendix\xspace}
\newcommand{\version}{1}
\newcommand{\SLV}[2]{\ifthenelse{\equal{\version}{0}}{#1}{ #2}}
\renewcommand{\paragraph}{\subparagraph}
\DeclareSymbolFontAlphabet{\mathbbm}{bbold}
\DeclareSymbolFontAlphabet{\mathbb}{AMSb}%
\newcommand{\xredx}[2] {\mathrel{{\uset{#1}{\red}}{}_{\mkern-3mu#2}}}
\newcommand{\xbackredx}[2] {\mathrel{{\uset{#1}{\leftarrow}}{}_{\mkern-3mu#2}}}
\newcommand{\xrevredx}{\xbackredx}
\newcommand{\revred}{\leftarrow}
\newcommand{\revsRed}{\mathrel{{\uset{\surf}{\Leftarrow}}}}
\newcommand{\cc}{\textbf {C}}
\renewcommand{\ss}{\textbf {S}}
\newcommand{\seq}[1]{\langle#1_n\rangle_{n\in\Nat}}
\renewcommand{\int}{\mathsf{i}}
\newcommand{\Nat}{\mathbb{N}}
\newcommand{\CC}{\mathbb{C}}
\newcommand{\surf}{\mathsf s}
\newcommand{\snf}{\textsf{snf}\xspace}
\newcommand{\redx}[1]  {\mathrel{\red{}_{\mkern-8mu#1}}}
\newcommand{\red}{\rightarrow}
\newcommand{\sred}{\uset{{\surf}}{\red}}
\newcommand{\nsred}{\uset{{\neg \surf~}}{\red}}
\newcommand{\sredx}[1]  {\mathrel{\sred{}_{\mkern-8mu#1}}}
\newcommand{\nsredx}[1]{\mathrel{\nsred{}_{\mkern-8mu#1}}}
\newcommand{\sredb}  {\mathrel{\sredx{\beta}}}
\newcommand{\nsredb}{\mathrel{\nsredx{\beta}}}
\newcommand{\redbv}{\red_{\beta_v}}
\newcommand{\redb}{\rightarrow_{\beta}}
\newcommand{\redc}  {\red_{\gamma}}
\renewcommand{\st}{\mid}
\newcommand{\two}{\frac{1}{2}}
\theoremstyle{plain}
\newtheorem{Def}[theorem]{Definition}
\newtheorem{thm}[theorem]{Theorem}
\newtheorem{prop}[theorem]{Proposition}
\newtheorem{notation}[theorem]{Notation}
\newtheorem{fact}[theorem]{Fact}
\newtheorem*{theorem*}{Theorem}
\newtheorem*{proposition*}{Prop}
\newtheorem*{lemma*}{Lemma}
\newtheorem*{ex*}{Example}
\newtheorem*{cor*}{Cor.}
\newtheorem*{prop*}{Prop}
\newtheorem*{Def*}{Def}
\newcommand{\lam}{\lambda}
\newcommand{\ie}{\emph{i.e.}\xspace}
\newcommand{\eg}{\emph{e.g.}\xspace}
\newcommand{\set}[1]{\{#1\}}
\newcommand{\iI}{i \in I}
\newcommand{\jJ}{j \in J}
\newcommand{\uset}[3][0ex]{%
	\mathrel{\mathop{#3}\limits_{
			\vbox to#1{\kern-6\ex@
				\hbox{$\scriptstyle#2$}\vss}}}}
\newcommand{\usetfull}[3][0ex]{%
	\mathrel{\mathop{#3}\limits_{
			\vbox to#1{\kern-6\ex@
				\hbox{$\scriptstyle#2$}\vss}}}}
\newcommand{\rredc}{\mapsto_{\gamma}}
\newcommand{\subs}[2]{ \{#2/#1\} }
\renewcommand{\AA}{A}
 \newcommand{\m}{\mathtt m}
 \newcommand{\n}{\mathtt n} 
 \renewcommand{\r}{\mathtt r}     
 \newcommand{\s}{\mathtt s}
 \newcommand{\Red}{\Rightarrow} 
 \newcommand{\Redb}{\Red_{\beta} }  
 \newcommand{\Redq}{\Red_{q}} 
  \newcommand{\mdist}[1]{ \{\!\!\{\,  #1  \,\}\!\!\} }   
 \newcommand{\sRed}{\uset{\surf~}{\Red}}
 \newcommand{\sRedx}[1]  {\mathrel{\sRed{}_{\mkern-6mu{#1}}}}
 \newcommand{\sRedb}{\sRedx{\beta}}
  \newcommand{\sRedq}{\sRedx{q}}
 \newcommand{\nsRed}{\uset{\neg\surf~}{\Red}}
 \newcommand{\nsRedx}[1]  {\mathrel{\nsRed{}_{\mkern-6mu#1}}}
 \newcommand{\nsRedb}{\nsRedx{\beta}}
\newcommand{\den}[1]{\llbracket {#1} \rrbracket}
\newcommand{\full}{\rightrightarrows}
\newcommand{\xfull}[1]{\usetfull{#1~}{\full}}
\newcommand{\xfullx}[2] {\mathrel{{\usetfull{#1}{\full}}{}_{\mkern-3mu#2}}}
\newcommand{\sfull}{\xfull{\surf}}
\newcommand{\sfullx}[1]{\xfullx{\surf}{#1}}
\Crefname{section}{Sect.}{Sections}
\Crefname{theorem}{Thm.}{Thm.}
\Crefname{thm}{Thm.}{Thm.}
\Crefname{proposition}{Prop.}{Prop.}
\Crefname{prop}{Prop.}{Prop.}
\Crefname{definition}{Def.}{Def.}
\Crefname{Def}{Def.}{Def.}
\Crefname{figure}{Fig.}{Figs.}
\Crefname{equation}{Eq.}{Eqss.}
\newcommand{\bang}{!}
\newcommand{\QLambda}{\Lambda_q}
\newcommand{\QQ}{{\mathcal Q}}
\newcommand{\blam}{\lambda^!}
\newcommand{\termVar}[1]{{#1}} 
\newcommand{\termApp}[2]{{#1} \; {#2}}   
\newcommand{\meas}[3]{\, \mathtt{meas} ( {#1},{#2},{#3} )}  
\newcommand{\new}{\mathtt{new}}
\newcommand{\re}[1]{r_{#1}}
\newcommand{\U}[1]{U_{#1}}
\newcommand{\mem}[1]{\mathtt{#1}}
\newcommand{\Q}{\mem{Q}}
\newcommand{\memsize}[1]{\left| {#1} \right|}
\newcommand{\Reg}[1]{\mathtt{Reg}(#1)}
\newcommand{\FV}[1]{\mathtt{FV}(#1)}
\newcommand{\hole}[1]{\llparenthesis #1 \rrparenthesis}
\newcommand{\et}{\mathbf{p}}
\newcommand{\Cmp}{\mathbb{C}}
\newcommand{\qstates}{\mathcal E}
\newcommand{\States}{\mathcal P}
\newcommand{\Programs}{\mathcal P}
\newcommand{\MQ}{\mathtt{MD} ({\States})}
\newcommand{\perm}[1]{\pi_{#1}}
\newcommand{\pep}[2]{\pi_{#1}^{#2}}
\newcommand{\pqun}[1]{{#1} \otimes \text{Id}}
\newcommand{\pepqun}[2]{U_{#1}^{#2}}
\newcommand{\pqdeux}[1]{{#1} \otimes \text{Id}}
\newcommand{\pepqdeux}[3]{U_{#1}^{{#2},{#3}}}
\newcommand{\tpair}[2]{\langle{#1},{#2}\rangle}
\newcommand{\termPairNotation}[2]{\tpair{#1}{#2}}
\newcommand{\termPair}[3]{\lambda \termVar{#1} . \left( \termApp{#1}{#2} \right) {#3}}
\newcommand{\tuple}[1]{\langle #1 \rangle}
\newcommand{\pair}[2]{ {\pmb{[}}#1 ; #2{ \pmb{]}} }
\newcommand{\prob}[3]{\rho_{#1}^{#2} \left( {#3} \right)}
\newcommand{\redq}{\redx{q}}
\newcommand{\sredq}{\redx{q}}
\newcommand{\sfullb}{\sfullx{\beta}}
\newcommand{\ket}[1]{{|{#1}\rangle}}
\newcommand{\norm}[1]{ \| #1\|}
\newcommand{\zero}{\mathbf 0}
\newcommand{\Lim}{\mathtt{Lim}}
\newcommand{\DST}[1]{\mathtt {D}(#1)}
\newcommand{\defeq}{\coloneqq}
\newcommand{\CNOT}{\mathsf{CNOT}}
\newcommand{\Id}{\mathsf{Id}}
\newcommand{\NOT}{\mathsf{\NOT}}
\newcommand {\hq}{\frac{\sqrt{2}}{2}(\ket 0 +\ket 1)}
\newcommand{\letin}[3]{\mathtt{let}\;#1=#2\;\mathtt{in}\;#3}
\newcommand{\leti}{\mathtt{let}\;}
\newcommand{\ite}[3]{\mathtt{ if }\;#1 \mathtt{ then }\;#2\;\mathtt{ else }\;#3}
\newcommand{\Down}[1]{\Downarrow #1}
\newcommand{\sDown}[1]{\Downarrow_s #1}
\newcommand{\sfullDown}[1]{\downdownarrows_s #1}
\renewcommand{\Pr}[1]{\mathbb{P}(#1)}
\newcommand{\cpoP}{ p}
\newcommand{\cpoQ}{ q}
\newcommand{\redL}{\xredx \circ }
\newcommand{\revredL}{\xrevredx \circ}
\newcommand{\redM}{\xredx \bullet}
\title{A Rewriting Theory for Quantum $\lambda$-Calculus }
\author{Claudia Faggian}{IRIF, CNRS, Université Paris Cité, France}{faggian@irif.fr}{0009-0009-8875-3595}{}
\author{Gaetan Lopez}{IRIF, CNRS, Université Paris Cité, France}{Gaetan.Lopez@irif.fr }{}{}
\author{Beno\^it  Valiron}{Universit\'e Paris-Saclay, CNRS, CentraleSup{\'e}lec, ENS Paris-Saclay, Inria, Laboratoire M{\'e}thodes Formelles, 91190, Gif-sur-Yvette, France}{benoit.valiron@universite-paris-saclay.fr}
{0000-0002-1008-5605}{}
\keywords{quantum lambda-calculus, probabilistic rewriting, operational semantics, asymptotic normalization, principles of quantum programming languages} 
\begin{document}
	\maketitle
	
\begin{abstract}
  Quantum lambda calculus has been studied mainly as an idealized
  programming language---the evaluation essentially corresponds to a
  deterministic abstract machine. Very little work has been done to
  develop a rewriting theory for quantum lambda calculus. Recent
  advances in the theory of probabilistic rewriting give us a way to
  tackle this task with tools unavailable a decade ago.
  Our primary focus are standardization and normalization results.
\end{abstract}	
	

\section{Introduction }

Quantum computation is a model of computation in which one has access
to data coded on the state of objects governed by the law of quantum
physics. Due to the unique nature of quantum mechanics, quantum data
exhibits several non-intuitive properties~\cite{nielsen02quantum}:
 it
is non-duplicable, it can exist in superposition, and reading the memory
exhibits a probabilistic behavior. Nonetheless, the mathematical formalization is
well-established: the state of a quantum memory and the available
manipulations thereof can be expressed within the theory of Hilbert
spaces.

Knill's QRAM model~\cite{knill96conventions} describes a generic
interface for interacting with such a quantum memory.  The memory is
modeled with uniquely identified quantum registers, each holding one
quantum bit--- also called a qubit. The interface should make it
possible to create and discard registers and apply elementary
operations on arbitrary registers. These operations consist of
\emph{unitary gates} and \emph{measurements}. The former are internal,
local modifications of the memory state, represented by a quantum
circuit, while the latter are the operations for reading the
memory. Measurements are \emph{probabilistic operations} returning a
classical bit of information.

Quantum algorithms are typically designed with a model akin to Knill's
QRAM~\cite{nielsen02quantum}. A quantum algorithm consists of the
description of a sequence of quantum operations, measurements, and
classical processing. The control flow is purely classical, and
generally, the algorithm's behavior depends on the results of past
measurements. An algorithm, therefore, mixes classical processing and
interaction with quantum memory in a potentially arbitrary way:
Quantum programming languages should be designed to handle it.

\paragraph{Quantum $\lam$-Calculus and Linear Logics.}
In the last 20 years, many proposals for quantum programming languages
have emerged \cite{gay2006quantum,green2013quipper,paykin2017qwire,paolini2019qpcf,silq,chareton2021automated}. Similar to classical languages, the paradigm
of higher-order, functional quantum programming languages have been
shown to be a fertile playground for the development of well-founded,
formal quantum languages aiming at the formal analysis of quantum
programs \cite{paykin2017qwire,chareton2021automated}.

The \emph{quantum $\lambda$-calculus} of Selinger\&Valiron
\cite{SelingerValiron} lies arguably at the foundation of the
development of higher-order, quantum functional programming languages
\cite{LagoMasiniZorzi, LagoMZ09, pagani2014applying,popl17,
  lee2022concrete}. Akin to other extensions of lambda-calculus with
probabilistic \cite{DiPierroHW05, LagoZ12,EhrhardPT11}
or non-deterministic behavior~\cite{deLiguoroP95}, the
quantum lambda calculus extends the regular lambda calculus---core of
functional programming languages---with a set interface to manipulate a
quantum memory. Due to the properties of the quantum memory, quantum
lambda-calculi should handle non-duplicable data and probabilistic
behavior.

One of the critical points that a quantum programming language should
address is the problem of the \emph{non-duplicability} of quantum
data. In the literature, non-duplicable data is usually captured with
tools coming from linear logic. The first
approach~\cite{SelingerValiron,pagani2014applying,popl17} consists in
using types, imposing all functions to be linear by default, and with
the use of a special type $!A$ to explicitly pinpoint duplicable
objects of type $A$. An
alternative---untyped---approach~\cite{LagoMZ09,LagoMasiniZorzi}
considers instead an untyped lambda calculus, augmented with a special
term construct ``$!$'' and validity constraints to forbid the
duplication of qubits.

\paragraph{Probabilistic and Infinitary Behavior.}
A quantum $\lam$-calculus featuring all of quantum computation should
not only permit the manipulation of quantum register with unitary
operations but should also give the possibility to \emph{measure} them,
and retrieve a classical bit of information. As the latter is a
probabilistic operation, an operational semantics for a quantum
$\lam$-calculus is inherently probabilistic.
As in the non-quantum case,  probabilistic choice
and unbounded recursion yield subtle behaviors.

\medskip
\noindent
\textit{Fair Coin.~~} Consider the following
program $L$, written in a mock-up ML language with quantum features, similar to the language  of \cite{SelingerValiron}:
\[
  L \defeq \texttt{ if }  \texttt{ meas}(H \new) \texttt{ then }  I \texttt{ else } \Omega.
\]
For this introduction, we only describe its
behavior informally. The term $L$ above produces a qubit\footnote{The  reader unfamiliar with the notation should not worry, as 
the formal  details are not essential at this point: just retain that \emph{the state of our qubit  is a superposition of two (basis) states}, which play the role of head and tail.   When needed, in  \Cref{sec:quantum_formalism} we provide a brief introduction to 
	the mathematical formalism for quantum computation
	\cite{nielsen02quantum}.} 
in
state $\frac{\sqrt{2}}{2}(\ket 0 + \ket 1)$ by creating a fresh qubit
in state $\ket 0$ (this is the role of $\new$), and applying the Hadamard gate
$H$. Measuring this qubit amounts to flipping a fair coin with equal
probability $\two$. In one case, the program returns the identity
function $I$; otherwise, it diverges---the term $\Omega$ stands 
for the usual, non-terminating looping term.

The program $L$, therefore, uses the quantum memory only once (at the
beginning of the run of the program), and it terminates with
probability $\tfrac12$.

\medskip
\noindent
\textit{Unbounded Use of Fair Coin.~~}
In the context of probabilistic behavior, unbounded loops might
terminate asymptotically: A program may terminate \emph{with
  probability} $1$, but only at the limit (\emph{almost sure
  termination}).  A simple example suffices to illustrate this point.

Consider a quantum process $R$ that flips a coin by creating
and measuring a fresh
qubit. If the result is head, the process stops,
outputting $I$. If the result is tail, it starts over. In our
mock-up ML, the program $R$ is
\begin{equation}\label{ex:coin_cont} 
  R\defeq  \mathtt{letrec} \: f  x =
  \big(\ite{(\mathtt{meas} \; x)}{I}{f(H \new)}\big )   \;\mathtt{in}\;f (H \new).
\end{equation}
After $n$ iterations, the program $R$ is in normal form with
probability $\two + \frac{1}{2^2} + \dots + \frac{1}{2^n}$.  Even if
the termination probability is $1$, this probability of termination is
not reached in a finite number of steps but \emph{as a limit}.  The
program in \Cref{ex:coin_cont} is our leading example: we formalize it
in \Cref{ex:coin_formal}.

\paragraph{Operational Semantics of Quantum Programs.}
As it is customary when dealing with \emph{choice  effects}, the probabilistic behavior is
dealt with by fixing an \emph{evaluation strategy}.   Think of tossing a (quantum) coin and duplicating the result, versus tossing the coin twice, which is indeed one key  issue at the core of confluence failure in such  settings (as observed in \cite{deLiguoroP95,LagoMZ09}). 
 Following the standard approach adopted for functional languages with side effects, the 
evaluation strategy in   quantum
$\lambda$-calculi such as~\cite{SelingerValiron,pagani2014applying,popl17} is
 a deterministic call-by-value strategy: an argument is reduced to
a value before being passed to a function.

\begin{quote}
  \textit{One aspect that has been seldom examined is however the
    \emph{properties} of the \emph{general reduction} associated to
    the quantum lambda-calculus:  this is the purpose of this paper.}
\end{quote}

\paragraph{A Rewriting Theory for the Quantum $\lam$-Calculus.}
Lambda calculus has a rich, powerful notion of reduction, whose
properties are studied by a vast amount of literature. Such a general
\emph{rewriting theory} provides a
\emph{sound framework} for reasoning about
programs transformations, such as compiler optimizations or parallel
implementations, and a base on which to reason about program
equivalence.
The two fundamental operational properties of lambda calculus are
\emph{confluence} and \emph{standardization}. Confluence guarantees
that normal forms are unique, standardization that if a normal form
exists, there is a strategy that is guaranteed to terminate in such
a form.

As pioneered by Plotkin~\cite{PlotkinCbV}, standardization allows to
bridge between the general reduction (where programs transformation
can be studied), and a specific evaluation strategy, which implements
the execution of an idealized programming language.
Summarizing the situation, for programming languages, there are two
kinds of term rewriting: run-time rewriting (``evaluation'') and
compile-time rewriting (program transformations).

In the context of quantum lambda-calculi, the only line of research
discussing \emph{rewriting} (rather than fixing a deterministic
strategy) has been pursued by Dal Lago, Masini, and
Zorzi~\cite{LagoMasiniZorzi,LagoMZ09}: working with an untyped quantum
lambda-calculus, they establish confluence results (and also a form of
standardization, but only for the sub-language without
measurement~\cite{LagoMZ09}---therefore, without the probabilistic
behavior).

In this paper, we study not only confluence but also
\emph{standardization and normalization results} for a quantum
$\lambda$-calculus featuring \emph{measurement}, and where $\beta$
reduction (the engine of $\lambda$-calculus) is fully unrestricted.
Recent advances in probabilistic and monadic rewriting theory
\cite{BournezG05, popl17,DiazMartinez17,Kirkeby, AvanziniLY20,
  Faggian22, GavazzoF21, KassingFG24} allow us to tackle this task
with a formalism and powerful techniques unavailable a decade ago.
Still, quantum rewriting is \emph{more challenging} than probabilistic
rewriting because we need to manage the states of the quantum memory.
The \emph{design of the language} is, therefore, also delicate: we need to
control the duplication of qubits while allowing the full power of
$\beta$-reduction.

\paragraph{Contributions.}
We can summarize the contributions of the paper as follows. These are
described in more details in Section~\ref{sec:overview}, once all the
necessary materials have been set up.
\begin{itemize}
\item An untyped quantum lambda-calculus, closely inspired by
  \cite{LagoMasiniZorzi} but re-designed to allow for a more general
  reduction, now encompassing the full strength of $\beta$-reduction;
  validity constraints make it quantum-compatible.

\item The calculus is equipped with a rich operational semantics,
  which is sound with respect to quantum computation. The
  \emph{general reduction} enables arbitrary $\beta$-reduction;
  surface reduction (in the spirit of \cite{Simpson05} and other
  calculi based on Linear Logic) plays the role of an \emph{evaluation
    strategy}.
	
\item We study the rewriting theory for the system, proving
  \emph{confluence} of the reduction, and \emph{standardization}.

\item We obtain a normalization result that scales to the
  \emph{asymptotic} case, defining a \emph{normalizing strategy}
  w.r.t. termination at the limit.
\end{itemize}

Missing proofs and some more technical details are given in the  \Appendix. 

\section{Setting the Scene: the Rewriting Ingredients}
\label{sec:setting-up}

This section is devoted to a more detailed (but still informal) discussion of
 two key elements: the style of $\lambda$-calculus we adopt, and what
standardization results are about. The calculus is then defined in \Cref{sec:syntax}, its  operational semantics in \Cref{sec:operational}; standardization and normalization in the following sections. 

\paragraph{Untyped Quantum $\lam$-Calculus. }
Our quantum calculus is built on top of Simpson's calculus $\Lambda^!$
\cite{Simpson05}, a variant of untyped $\lam$-calculus inspired by
Girard's Linear Logic \cite{Girard87}. In this choice, we follow
\cite{LagoMasiniZorzi, LagoMZ09, popl17}.  Indeed, the fine control of
duplication which $\Lambda^!$ inherits from linear logic makes it an
ideal base for quantum computation.

The Bang operator $!$ plays the role of a marker for non-linear
management: duplicability and discardability of resources.
Abstraction is refined into linear abstraction $\lam x.M$ and
non-linear abstraction $\blam x.M$. The latter allows duplication of
the argument, which is required to be suspended as thunk $!N$,
behaving as the $!$-box of linear logic.

\begin{example}[duplication, or  not]\label{ex:duplicating}
  $(\lambda x.  Hx) (\new) $ is a valid term, but
  $ (\lambda x.  \tuple{x,x}) (\new) $ which would duplicate the qubit
  created by $\new$ is not. Instead, we can validly write
  $ (\blam x.  \CNOT \tuple{Hx, x})( !\new) $ which thunks $\new$ and
  then duplicate it, yielding $ \CNOT \tuple{H\new, \new} $. Notice that this term
  produces \emph{an entangled pair} of qubits.
\end{example}

In our paper, as well as in \cite{LagoMasiniZorzi, LagoMZ09, popl17},
surface reduction (\ie, no reduction is allowed in the scope of the
$!$ operator) is the key ingredient to allow for the coexistence of
quantum bits with duplication and erasing.  Unlike previous work
however, in our setting $\beta$-reduction---the engine of
$\lambda$-calculus---is unconstrained. We prove that only quantum
operations needs to be surface, making ours a conservative extension
of the usual $\lambda$-calculus, with its full power.

\paragraph{Call-by-Value... or rather,  Call-by-Push-Value.}
The reader may have recognized that reduction in our calculus follows
the Call-by-Push-Value paradigm, with the Bang operator
\emph{thunking} a computation. In fact, Simpson's calculus
\cite{Simpson05}, more precisely the fragment without linear
abstraction, is essentially an untyped version of Call-by-Push-Value,
and it has been extensively studied in the literature of Linear Logic
also with the name of \emph{Bang calculus}
\cite{EhrhardG16,GuerrieriManzonetto18, BucciarelliKRV23}, as
a unifying framework which subsumes both Call-by-Name (CbN) and
Call-by-Value(CbV) calculi.

\begin{table}[tb]
	\centering
	\scalebox{.9}{\begin{tabular}{|l|l|l|}
		\hline
		\textbf{Call-by-name $\lambda$-calculus}
		& \textbf{Call-by-value $\lambda_v$-calculus}
		& \textbf{Linear $\lambda_!$-calculus}
		\\\hline
		General reduction: $(\to_\beta)$
		& General reduction: $(\to_{\beta_v})$
		& General reduction: $(\to_{\beta_!})$
		\\
		Evaluation: head ($\to_h$)  
		& Evaluation: weak-left ($\to_l$)
		& Evaluation: surface ($\to_s$)
		\\
		\emph{1. Head factorization}: 
		& \emph{1. Weak-left factorization}:
		& \emph{1. Surface factorization}:
		\\
		\quad $M\to^*_\beta N$ iff $M\to^*_h \cdot\to^*_{\neg h} N$
		& \quad $M\to^*_{\beta_v} N$ iff $M\to^*_l \cdot\to^*_{\neg l} N$
		&\quad  $M\to^*_{\beta_!} N$ iff $M\to^*_s \cdot\to^*_{\neg s} N$
		\\
		\emph{2. Head normalization}: 
		& \emph{2. Convergence to a value}:
		& \emph{2. Surface normalization}:
		\\
		\quad $M\to^*_\beta H$ iff $M\to^*_h H'$
		&\quad $M\to^*_{\beta_v} V$ iff $M\to^*_l V'$
		&\quad  $M\to^*_{\beta_!} S$ iff $M\to^*_s S'$
		\\\hline
	\end{tabular}}
	\caption{Summarizing Standard Factorization and Normalization Results}
	\label{tab:fact-lc}
\end{table}

\paragraph{A Taste of  Standardization and Normalization:  Pure  $\lambda$-Calculi.}
Termination and confluence concern the existence and the uniqueness of
normal forms, which are the results of a computation.  Standardization
and normalization are concerned with \emph{how to compute} a given
outcome. For example, is there a strategy which guarantees
termination, if possible? The desired outcome is generally a specific
kind of terms.  In the classical theory of $\lambda$-calculus (\`a   la
Barendregt), the terms of interest are \emph{head normal forms}. In
the Call-by-Value approach, the terms of computational interest are
values.

\medskip
\noindent
\textit{Classical $\lambda$-calculus.~~}  The simplest form of
standardization is \emph{factorization}: any reduction sequence can be
re-organized so as to first performing specific steps and then everything
else.  A paradigmatic example is the head factorization theorem of
classical $\lambda$-calculus (theorem 11.4.6 in \cite{Barendregt}):
every $\beta$-reduction sequence $M\to^*_\beta N$ can be
re-organized/factorized so as to first reducing head redexes and then
everything else---in symbols $M\to^*_h \cdot\to^*_{\neg h} N$.

A major consequence is \emph{head normalization}, relating arbitrary
$\beta$ reduction with \emph{head} reduction, w.r.t.  \emph{head
  normal forms}, the terms of computational interest in classical
$\lambda$-calculus.  A term $M$ has head normal form if and only if
head reduction terminates:
\begin{center}
  $M\redb^* H (\text{hnf})$ $ \Leftrightarrow $ $M\to_h^* H' (\text{hnf})$
\end{center}

\medskip
\noindent
\textit{Plotkin's Call-by-Value.~~} This kind of results takes its
full computational meaning in Plotkin's \cite{PlotkinCbV}
Call-by-Value $\lambda$-calculus The terms of interest are here
\emph{values}. Plotkin relates arbitary $\beta$ reduction
($\to_{\beta_v}$) and the evaluation strategy $\to_l$ which only
performs \emph{weak-left} steps (no reduction in the scope of
abstractions), by establishing
\begin{center}
  $M\redbv^* V (\text{value})$ $ \Leftrightarrow $
  $M\to_l^* V' (\text{value})$
\end{center}
In words: the unconstrained reduction ($\to_{\beta_v}$) returns a
value if and only if the evaluation  strategy ($\to_l$) returns a
value.
The proof relies on a factorization: $M\to^*_{\beta_v} N$ iff
$M\to^*_l \cdot\to^*_{\neg l} N$.

\medskip
\noindent
\textit{Simpson's pure calculus.~~} Standardization and Normalization
results have been established by Simpson also for its  calculus $\Lambda^!$
\cite{Simpson05}. Here, the evaluation strategy is \emph{surface
  reduction}, \ie no reduction is allowed in the scope of a $!$ 
operator. 
%
%

\medskip
\noindent
\textit{Summary.~~} The table in \Cref{tab:fact-lc} summarize the
factorization and normalization result for the three calculi  (respectively based on $\beta,\beta_v, \beta_!$) which we have discussed.



\newcommand{\program}{program\xspace}
\newcommand{\Program}{Program\xspace}

\section{Untyped Quantum $\lambda$-Calculus} \label{sec:syntax}

Quantum lambda-calculus is an idealization of functional quantum
programming language: following Selinger and Valiron
\cite{SelingerValiron}, it consists of a regular $\lambda$-calculus
together with specific constructs for manipulating quantum data and
quantum operations. One of the problems consists in accomodating the
non-duplicability of quantum information: in a typed setting
\cite{SelingerValiron} one can rely on a linear type system. In our
untyped setting, we instead base our language on Simpson's
$\lambda$-calculus \cite{Simpson05}, extended with constructs
corresponding to quantum data and quantum operations.

Due of entanglement, the state of an array of quantum bits cannot be
separated into states of individual qubits: the information is
\emph{non-local}. A corollary is that
quantum data cannot easily be written inside lambda-terms: unlike
Boolean values or natural numbers, one cannot put in the grammar of
terms a family of constants standing for all of the possible values a
quantum bit could take. 
A standard procedure \cite{SelingerValiron} relies on an
external memory with register identifiers used as placeholders for
qubits inside the lambda-term. As they stands for qubits, these
registers are taken as non-duplicable.

In the original quantum lambda-calculus \cite{SelingerValiron},
regular free variables of type qubit were used to represent
registers. In this work, being untyped we prefer to consider two kinds
of variables: regular term variables, and special variables, called
\emph{registers} and denoted by $r_i$ with $i\in \Nat$, corresponding
to the qubit number $i$ in the quantum memory. 
The language is also equipped with three term constructs to manipulate
quantum information. The first term construct is
 $\new$, producing the allocation of a fresh qubit\footnote{%
    Unlike the original quantum $\lambda$-calculus
    \cite{SelingerValiron}, the term $\new$  
    literally evaluates to a qubit.%
}.
The second term construct is  $\meas{r_i}{M_0}{M_1}$, corresponding to a destructive measurement
  of the qubit $r_i$. The evaluation
  then \emph{probabilistically} continues as $M_0$ or $M_1$, depending
  on the measure being $\ket 0$ or $\ket 1$.
Finally, assuming that the memory comes with a set of built-in unitary
  gates ranged over by letters $A, B, C$, the term $U_A$ corresponds
  to a function applying the gate $A$ to the indicated qubits.

\paragraph{Raw Terms.}
Formally, \emph{raw terms} $M, N, P, \dots$ are built according to the
following grammar.
\begin{align*}\label{terms}\tag{\textbf{terms} $\QLambda$}
  M,N,P  ::={} & x \mid {!M} \mid
          \lam x.M \mid
          \blam x.M \mid M N \mid r_{i} \mid U_{A} \mid 
          \new \mid \meas{P}{M}{N}
\end{align*}
where $x$ ranges over a countable set of \emph{variables}, $r_{i}$
over a disjoint set of \emph{registers} where $i\in \Nat$ is called
the \emph{identifier} of the register, and $U_{A}$ over a set of
build-in n-ary \emph{gates}.  In this paper, we limit the arity $n$ to
be $1$ or $2$.
Pairs do not appear as a primitive construct, but we adopt the
standard encoding, writing $\tpair{M}{N}$ as sugar for
$\termPair{f}{M}{N}$. We also use the shorthand
$\tuple{M_1, \dots,M_n}$ for $\tuple{M_1\tuple{M_2, \dots}}$.
The variable $x$ is bound in both $\lam x . P$ and $\blam x . P$. As
usual, we silently work modulo $\alpha$-equivalence.
Given a term of shape $\meas{P}{M}{N}$, we call $M$ and $N$ its
\emph{branches}.
As usual, the set of free variables of a term $M$ are denoted with
$\FV{M}$. The set of registers identifiers for a term $M$ is denoted
with $\Reg{M}$.

\begin{remark}\label{rem:pb-lin}
  Without any constraints, one could write terms such as
  $\tuple{r_0,r_0}$ or $\lambda x.\tuple{x,x}$. Both are invalid: the
  former since a qubit cannot be duplicated, the latter since
  $\lambda$-abstractions are meant to be linear in Simpson's calculus.
\end{remark}

\paragraph{Terms Validity.}
To deal with the problem in Remark~\ref{rem:pb-lin}, we need to introduce the
notions of context and  surface context, to speak of  occurrences and surface occurrences of subterms.

A context is a term with a hole. We define general \emph{contexts}
where the hole can appear anywhere, and \emph{surface contexts} for
contexts where holes do not occur in the scope of a $\bang$ operator,
nor in the branches of a $\meas{-}{-}{-}$. They are generated by the
grammars
\begin{align*}\tag{\textbf{contexts}}
  \cc 
  ::= {}
  & \hole{~}  \mid M \cc \mid \cc M   \mid \lam x . \cc \mid  \blam x . \cc  
    \mid\meas{\cc}{M}{N} \mid  \\
  & \meas{M}{\cc}{N}\mid \meas{M}{N}{\cc}\mid{! \cc},
  \\
  \tag{\textbf{surface contexts}}
  \ss ::={} & \hole{~} \mid M \ss \mid \ss M   \mid \lam x . \ss \mid  \blam x . \ss 
              \mid \meas{\ss}{M}{N},
\end{align*}
where $\hole{~}$ denotes the \emph{hole} of the corresponding context. The
notation $\cc\hole{R}$ (or $\ss\hole{R}$) stands for the term where the only occurrence
of a hole $\hole{~}$ in $\cc$ (or $\ss$) is replaced with the term $R$,
potentially capturing free variables of $R$.

 Contexts and surface contexts allow us to  formalize  two notions of occurence of a subterm $T$.
The pair $(\cc, T)$ (resp. $(\ss, T)$) is an \emph{occurence}
(resp. \emph{surface occurence}) of $T$ in $M$ whever $M = \cc\hole{T}$
(resp. $M=\ss\hole{T}$).  By abuse of notation, we will simply  speak of 
 occurrence of a subterm in  $M$, the context being implicit.

We can now define a notion of valid terms, agreeing with the quantum
principle of no-cloning.

\begin{Def}[ Valid Terms, and Linearity]\label{def:validity}\rm
  A term $M$ is \textbf{valid} whenever
  \begin{itemize}
  \item no register occurs in $M$ more than once, and every
    occurrences of registers are surface;
  \item for every subterm $\lam x . P$ of $M$, $x$ is \emph{linear} in
    $P$, \ie $x$ occurs free exactly once in $P$ and, moreover, this
    occurrence of $x$ is surface.
  \end{itemize}
\end{Def}

\begin{remark}
  The validity conditions for registers and linear variables do not
  allow us to put registers inside branches. So for instance a term
  such as
  \[
    \lam z. \meas{r_0}{z\,(U_A\,r_1)}{z\,(U_B\,r_1)}.
  \]
  is invalid in our syntax. This function would measure $r_0$ and
  performs an action on $r_1$ based on the result.
  If one cannot write such a term directly with the constraints we
  have set on the language, one can however encode the corresponding behavior as follows:
  \[
    (\blam f. f z r_1)\meas{r_0}{!(\lam u x.u(U_{A} x))}{!(\lam u
      x.u(U_{B} x)}.
  \]
  The action on the register $r_1$ is the function $f$ whose
  definition is based on the result of the measurement of $r_0$.
\end{remark}

\paragraph{Quantum Operations.}\label{sec:quantum_formalism}
Before diving into the definition of the notion of program, we briefly
recall here the mathematical formalism for quantum computation
\cite{nielsen02quantum}.

The basic unit of information in quantum computation is a quantum bit
or \emph{qubit}. The state of a single qubit is a normalized vector
of the 2-dimensional Hilbert space $\CC^2$. We denote the standard
basis of $\CC^2$ as $\{\ket0,\ket1\}$, so that the general state of a
single qubit can be written as $\alpha \ket 0 + \beta \ket 1$, where
$|\alpha|^2+|\beta|^2=1$

The basic operations on quantum states are unitary operations and
measurements. A \emph{unitary operation} maps an n-qubit state to an
n-qubit state, and is described by a \emph{unitary}
$ 2^n \times 2^n $-matrix.  We assume that the language provides a set
of built-in unitary operations, including for example the
\emph{Hadamard gate} $ H $ and the \emph{controlled not gate}
$ \CNOT $:
\begin{minipage}{\textwidth}\scriptsize
	\begin{equation}
		H\defeq \dfrac{1}{\sqrt{2}}  \begin{pmatrix} 1 & 1\\ 1&-1\end{pmatrix}
		\quad\quad
		\CNOT \defeq 
		\begin{pmatrix}
			1 & 0& 0&0\\
			0 & 1& 0&0\\
			0 & 0& 0&1\\
			0& 0& 1&0\\
		\end{pmatrix}
	\end{equation}
\end{minipage}
\emph{Measurement} acts as a projection. When a qubit
$\alpha \ket 0 + \beta \ket 1$ is measured, the observed outcome is a
classical bit: either $0$ or $1$, with probabilities $|\alpha|^2$ and
$|\beta|^2$, respectively.  Moreover, the state of the qubit collapses
to $\ket 0$ (resp. $\ket 1$) if $0$ (resp. $ 1$) was observed.

\paragraph{Programs.}
In order to associate quantum states to registers in lambda-terms, we
introduce the notion of program, consisting of a \emph{quantum memory}
and a lambda-term of $\QLambda$. A program is closed under permutation
of register identifiers.

The \emph{state} of one qubit is a normalized vector in
$\qstates=\Cmp^2$.  The \emph{state} of a quantum memory (also called
\emph{qubits state} in the remainder of the document)
consisting of
$n$ qubits is a normalized vector
$\Q\in\qstates^n = (\Cmp^2)^{\otimes n}$, the $n$-fold Kronecker product
of $\qstates$. The size of $\Q$ is written $|\Q|\defeq n$.  We
identify a canonical basis element of $\qstates^n$ with a string of
bits of size $n$, denoted with $\ket{b_0{\ldots}b_{n-1}}$. A state
$\Q\in\qstates^n$ is therefore in general of the form
\(
\mem{Q} =
\sum_{b_0,{\ldots}b_{n-1}\in\{0,1\}}\alpha_{b_0{\ldots}b_{n-1}}\ket{b_0{\ldots}b_{n-1}}.
\)
If $\sigma$ is a permutation of $\{0,\ldots n-1\}$, we define
$\sigma(\Q)$ as
\(
\sigma(\Q) =
\sum_{b_0,{\ldots}b_{n-1}\in\{0,1\}}\alpha_{b_0{\ldots}b_{n-1}}\ket{b_{\sigma(0)}{\ldots}b_{\sigma(n-1)}}.
\)

A \emph{raw program $\et$} is then a pair $\pair{\mem{Q}}{M}$, where
$\mem{Q} \in \qstates^{n}$ and where $M$ is a valid term such that
$\Reg{M}=\{0,\ldots,n-1\}$.  We call $n$ the size of $\mem{Q}$ and we
denote it with $\memsize{\mem{Q}}$. If $\sigma$ is a permutation over
the set $\{0..n-1\}$, the \emph{re-indexing} $\sigma(\et)$ of $\et$ is
the pair $\pair{\sigma(\Q)}{\sigma(M)}$ where $\sigma(M)$ is $M$ with
each occurence of $r_i$ replaced by $r_{\sigma(i)}$.

\begin{Def}[Program] \label{def:programs}\rm A \emph{program} is an
  equivalence class of raw programs under re-indexing. We identify
  programs with their representative elements. The set of all programs
  is denoted with $\States$.
\end{Def}

\begin{example}\label{ex:reindexing}
  The following two raw programs are equal modulo re-indexing:
  \( \pair {\ket \psi\otimes \ket \phi}{\tuple{r_0 ,r_1}} = \pair
  {\ket \phi\otimes \ket \psi}{\tuple{r_1, r_0}} \).  In both cases,
  $\ket\psi$ is the first qubit in the pair and $\ket\phi$ the second
  one. Re-indexing is agnostic with respect to entanglement, and we
  also have the raw program
  \(
    \pair {\alpha\ket{00}+\beta\ket{01}+\gamma\ket{10}+\delta\ket{11}}{\tuple{r_0 ,r_1}}
  \)
  being a re-indexation of the raw program
  \(
    \pair {\alpha\ket{00}+\gamma\ket{01}+\beta\ket{10}+\delta\ket{11}}{\tuple{r_1 ,r_0}}
  \): they are two representative elements of the same program.
\end{example}

\section{Operational Semantics}\label{sec:operational}
The operational semantics of $\lambda$-calculus is usually formalized by means of a \emph{rewriting system}. In the setting of $\lambda$-calculus, the rewriting rules are also  known as \emph{reductions}.

\paragraph{Rewriting System.} We recall that a \emph{rewriting system} is a pair $(\mathcal A, \red)$ 
consisting of a set $\mathcal A$ and a binary relation $\red$ on $\mathcal A$  whose pairs are written  $t \to s$ and 
called \emph{reduction steps}. 
We denote   $\red^*$ (resp. $ \red^= $) the  transitive-reflexive  (resp. reflexive) closure of $\red$.
{We write $t \revred u$ if $u \red t$.}
If $\red_1,\red_2$ are binary relations on $\mathcal A$ then 
$\red_1\cdot\red_2$ denotes their composition. 
%

\paragraph{Probabilistic Rewriting. }
In order to define the operational semantics of the quantum lambda calculus, we need to formalize   probabilistic reduction.  We do so   by means of a rewrite system
over \emph{multidistributions}, adopting the \emph{monadic}  approach  recently developed  in the literature of probabilistic rewriting (see \eg \cite{AvanziniLY20,  popl17, DiazMartinez17, Faggian22}).  
Reduction is here   defined not simply on  programs, but on (monadic) structures representing probability distributions over programs, called \emph{multidistributions}.

The operational semantics of the language is defined by
specifying the probabilistic behavior of programs, then  lifting reduction 
to multidistributions of programs.  Let us recall the notions.
%
%
%

\smallskip
\noindent
\textit{Probability Distributions.~~}
Given a countable set $\Omega$, a function $\mu \colon\Omega\to[0,1]$
is a probability \emph{subdistribution} if
$\norm \mu := \sum_{\omega\in \Omega} \mu(\omega)\leq 1$ (a
\emph{distribution} if $\norm \mu=1$).  Subdistributions allow us to
deal with partial results.
We write $\DST{\Omega}$ for the set of subdistributions on $\Omega$,
equipped with the pointwise order on functions: $\mu \leq \rho$ if
$\mu (\omega) \leq \rho (\omega)$ for all $\omega\in \Omega$.
$\DST{\Omega}$ has a bottom element (the subdistribution $\zero$) and
maximal elements (all distributions).

\smallskip
\noindent
\textit{multidistributions.~~}
We use multidistributions \cite{AvanziniLY20} to \emph{syntactically}
represent distributions.
A \emph{multidistribution} $\m=\mdist{q_i\et_i}_{ i\in I}$ on the set
of programs $\Programs$ is a finite multiset of pairs of the form
$q_i \et_i$, with $q_i\in]0,1]$, $  \et_i\in \Programs$, and
$\sum_i q_i\leq 1$.  The set of all multidistributions on $\Programs$
is $\MQ$.  The sum of two multidistributions is noted $+$, and is simply the  union of the  multisets.  The product
$q\cdot \m$ of a scalar $q$ and a multidistribution $\m$ is defined
pointwise $q\cdot \mdist{p_i \et_i}_{\iI} := \mdist{(q\cdot p_i) \et_i}_{\iI}
$. We write $\mdist{\et}$ for $\mdist{1 \et}$.

\subsection{The Rewrite System  $\QQ$}

$\QQ$ is the rewrite system  $(\MQ, \Red)$ where    
the relation $\Red\subseteq \MQ\times \MQ$ is monadically defined in two phases.
First, we define   \emph{one-step reductions} from a program to  a
multidistribution. For example, if $\et$ is the program
$\pair{\frac1{\sqrt2}(\ket0+\ket1)}{\meas{r_0}{M}{N}}$, the program $\et$ reduces in one step to
$\mdist{ \two \pair{\ket{}}{M}, \two \pair{\ket{}}{N}}$.
Then, we \emph{lift} the definition of reduction to a binary relation
on $\MQ$, in the natural way. So for instance, reusing $\et$ above,
$\mdist{\two \et, \two \pair {\Q}{(\lambda x.xx)F}}$ reduces in one step to
$\mdist{ \frac{1}{4} \pair{\ket{}}{M}, \frac {1}{4} \pair{\ket{}}{N},
  \two \pair {\Q}{FF}}$.
Let us see the details.

\paragraph{I. Programs Reduction.}
We first define the reduction of a program $\et$ to a multidistribution.
The operational behavior of $\et$ is  given by 
beta reduction, denoted with $\redb$, and specific rules for handling
quantum operations---the corresponding reduction is denoted with
$\redq$.  
Formally, the relations  $\redb$ and $\redq$ (also called reduction steps) are  subsets of
$\States \times \MQ$ and are defined in \Cref{fig:steps} by \emph{contextual
closure} of the \emph{root rules} $\mapsto_{\beta}$ and $\mapsto_{q}$, given
in \Cref{fig:rules}. The relation $\red$ is then the union
$\redb \cup \redq$.

\smallskip
\noindent
\textit{The root rules.~~} They are given in \Cref{fig:rules}.  We call
\emph{redex} the term on the left-hand side of a rule.  Beta rules
come in two flavors, the linear one (b), which does not allow for
duplication, and the non-linear one (b!), which possibly duplicate
boxes (\ie terms of shape $!M$).
Quantum rules act on the qubits state, exactly implement the operation
which we had informally described in \Cref{sec:syntax}.  Notice that
the rule $ (m) $ has a probabilistic behaviour. The qubit which has
been measured can be discharged (as we actually do).

\noindent
\textit{Contextual Closures.~~} They are defined in \Cref{fig:steps}.
Observe that  while the $\beta$ rules 
are closed under
arbitrary contexts $\cc$, while the quantum rules are restricted to
surface contexts $\ss$ (no reduction in the scope of a $!$ operator, nor in the
branches of $\meas---$). This constraints guarantee that qubits are
neither duplicated nor delated.

\begin{remark}[Reindexing]
  As in \cite{SelingerValiron}, reduction is defined on programs,
  which are equivalence classes. We define the rules on a convenient
  representative. For example, in \Cref{fig:rules} rule ($u_1$)
  reduces the redex $U_Ar_0$. Modulo reindexing, the same rules can be
  applied to any other register.
\end{remark}

\paragraph{II. Lifting.}
	The lifting of a relation
	$\red_r \subseteq  \States \times \MQ$ to a relation on multidistributions is defined  in 
	\Cref{fig:lifting}. 
	In particular, 
	$\red, \redb,\redq, $ lift to 
	$\Red,\Redb,\Redq$, respectively.

\begin{figure}[tb]\centering
\scalebox{.8}{
  \begin{tabular}{l|l}
  \textbf{$\beta$ rules} &    \textbf{Quantum rules}\\[2pt]
 $\begin{array}{ll}
 	(b)   & \pair{\mem{Q}}{(\lam x . M)N} \mapsto_{\beta} \mdist{\pair{\mem{Q}}{M
 			\subs{x}{N} } }\\
 	(!b)  & \pair{\mem{Q}}{(\blam x . M) !N } \mapsto_{\beta} \mdist{\pair{\mem{Q}}{M \subs{x}{N}}}\\
 \end{array}$ 
&
  $  \begin{array}{ll}
    	(n) & 
    	\pair{\Q}{\new}
    	\mapsto_{q}
    	\mdist{\pair{\Q\otimes\ket{0}}{r_{n}}} 
    	\mbox{  where $|\Q| = n$}\\
    	(m) &
    	\pair{\Q}{\meas{r_{n}}{M}{N}}
    	\mapsto_{q}
    	\mdist{ |\alpha_0|^2\pair{\Q_0}{M},|\alpha_1|^2\pair{\Q_1}{N}}\\
    	&	\mbox{ where $\Q= \alpha_0\Q_0\otimes\ket0 + \alpha_1\Q_1\otimes\ket1$ and $Q$ has $n+1$ qubits}\\
    	(u_1) & \mbox{for   $A$    unary operator:}\\
    	&\pair{\Q}{U_{A}\,r_{0}}
    	\mapsto_{q}
    	\mdist{\pair{\Q'}{r_{0}}}
    	\mbox{ where $\Q'$ is $(A\otimes\Id)\Q$}\\
    	( u_2 ) & \mbox{ for $A$   binary operator:}\\
    	&\pair{\Q}{(U_{A}\,\tpair {r_{0}}{r_{1}}}
    	\mapsto_{q}
    	\mdist{\pair{\Q'}{\tpair {r_{0}}{r_{1}} }}
    	\mbox{ where
    		$\Q'$ is $(A\otimes\Id)\Q$.}
    \end{array}$
  \end{tabular}
}
  \caption{Root rules $(\mapsto)$}
  \label{fig:rules}
  \vskip 8pt
	\begin{tabular}{l|l}
			\textbf{$\beta$ steps} &\textbf{ Quantum steps} \\[2pt]
			$\infer{\pair \Q{\cc \hole {M} }\redb \mdist{\pair \Q {\cc\hole
						{M'} }}}{\pair \Q M\mapsto_{\beta} \mdist{\pair \Q {M'}}}$
		&
		$	\infer{  \pair{\Q}{\ss\hole M} \sredq \mdist{p_i \pair{\Q_i}{\ss\hole{M_i}}} }
			{ \pair \Q M \mapsto_{q}
				\mdist{p_i \pair {\Q_i} {M_i}}}$
	\end{tabular}
\\[6pt]
	$\red~:=~\redb \cup \sredq$
	\caption{Contextual closure of root rules:  $(\red)$}
	\label{fig:steps}
\end{figure}

\paragraph{Reduction Sequences.}
A $\Red$-sequence (or \emph{reduction sequence}) from $\m$ is a
sequence $\m=\m_0,\m_1,\m_2,\dots$ such that $\m_{i}
\Red\m_{i+1}$ for every $i$.
We write
$\m_0\Red^*\m$ to indicate the existence of a finite reduction sequence from $\m_0$, and $\m_0\Red^k
\m$ to specify the number $k$ of $\Red$-steps. Given a program  $\et$ and $\m_0=\mdist \et$, the sequence $\m_0 \Red
\m_1\Red\cdots$ naturally models the evaluation of $\et$; each $\m_k$ expresses the
``expected'' state of the system after $k$ steps.

\paragraph{Validity.}
Validity of terms is preserved: 
\begin{proposition}If $M$ is a valid term,  and $\pair \Q M \red \mdist{p_i \pair{ \Q_i} {M_i}}$, then $M_i$ is a valid term.
\end{proposition}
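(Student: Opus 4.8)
The plan is to prove the statement by reducing it to the root rules of \Cref{fig:rules} and then propagating the result through the contextual closure of \Cref{fig:steps}. Since $\red = \redb \cup \sredq$, a step $\pair{\Q}{M}\red\mdist{p_i\pair{\Q_i}{M_i}}$ factors, by the very definition of contextual closure, in one of two ways: either $M = \cc\hole{R}$ with $M_i = \cc\hole{R_i}$ and $\pair{\Q}{R}\mapsto_{\beta}\mdist{p_i\pair{\Q_i}{R_i}}$, for a $\beta$-step under an \emph{arbitrary} context $\cc$; or $M = \ss\hole{R}$ with $M_i = \ss\hole{R_i}$ and $\pair{\Q}{R}\mapsto_{q}\mdist{p_i\pair{\Q_i}{R_i}}$, for a quantum step under a \emph{surface} context $\ss$. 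Hence it suffices to (1) check that each root rule preserves validity of the contractum, and (2) show that validity propagates through the surrounding context.

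For step (2) I would first record the bookkeeping facts that make the lifting safe: no root contraction creates a new free term variable, i.e. $\FV{R_i}\subseteq\FV{R}$ for every contractum, and the only rule adding a register is $(n)$, which inserts the \emph{globally fresh} register $r_{|\Q|}\notin\Reg{M}$ at a surface position; in all other cases $\Reg{R_i}\subseteq\Reg{R}$. Then, by induction on the context ($\cc$ for $\beta$, $\ss$ for quantum steps), the conditions that validity of $\cc\hole{R}$ imposes on the context part — register-disjointness between $\cc$ and its hole, and the surface-linearity of the abstractions of $\cc$ that capture variables at the hole — depend only on $\Reg{\cdot}$, $\FV{\cdot}$ of the plugged term and on the surface status of the hole, none of which is worsened when passing from $R$ to $R_i$. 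So the context conditions survive, and we are left precisely with checking that each contractum $R_i$ is itself valid and keeps its registers surface.

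For step (1) the quantum rules are essentially direct. Rule $(n)$ replaces $\new$ at a surface position by the fresh register $r_{|\Q|}$, a single surface occurrence of a new identifier. Rule $(m)$ discards $r_n$ and keeps one branch; the key observation is that in a valid term every register is surface while the branches of $\meas{-}{-}{-}$ are \emph{not} surface positions, so both branches are register-free, and the surviving branch therefore satisfies the register condition trivially, while the surface-linearity condition of each of its abstractions is intrinsic to that abstraction's own body and is untouched by the branch being promoted to a surface position. Rules $(u_1),(u_2)$ leave the syntax tree unchanged apart from the memory. The substantive cases are the $\beta$ rules, which rest on a \emph{substitution lemma}: if $\cc\hole{(\lam x.M)N}$ (resp. $\cc\hole{(\blam x.M)\bang N}$) is valid, then $M\subs{x}{N}$ is valid and its registers are surface whenever those of $M$ and of $N$ are. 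For the linear rule $(b)$, validity forces $x$ to be linear — a single, surface occurrence in $M$ — so $N$ is inserted exactly once, at a surface position: no register is duplicated and none changes surface status. For the non-linear rule $(!b)$, $x$ may occur several times, possibly under a $\bang$ or inside branches, but the argument $\bang N$ sits at a surface position of a valid term, whence $N$ is register-free ($\Reg{N}=\emptyset$); duplicating a register-free subterm is harmless.

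The main obstacle is the substitution lemma together with its surface bookkeeping: one must check that substituting at a surface occurrence preserves the invariant ``every register is surface'', and that the intrinsic surface-linearity of each abstraction of $M$ and of $N$ is preserved by the substitution. This is where the Barendregt convention is needed, to exclude capture of the linear variables of $\cc$ (or of $M$) by free variables of $N$; once capture is ruled out, each abstraction condition remains local to its own body and is left intact, and the register condition then follows from the two observations above — a single surface insertion in the linear case, and a register-free argument in the non-linear case.
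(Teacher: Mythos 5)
Your overall architecture---factoring the step into a root contraction inside a context, checking each root rule (with a substitution lemma for the $\beta$ rules), and then propagating validity through the context by induction---is the natural one, and your register bookkeeping is essentially complete: freshness of $r_{|\Q|}$ for $(n)$, register-freeness of the branches of $\meas{-}{-}{-}$ and of bang-bodies because those positions are non-surface, single surface insertion for $(b)$, and $\Reg{N}=\emptyset$ for $(!b)$.

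The gap is in your step (2). The claim that the context-side conditions ``depend only on $\Reg{\cdot}$, $\FV{\cdot}$ of the plugged term and on the surface status of the hole, none of which is worsened'' is false as stated, and it is precisely where the two delicate cases hide. Linearity in \Cref{def:validity}, for an abstraction $\lam z$ of the context whose bound variable occurs in the redex, requires that $z$ occur \emph{exactly once} and that this occurrence be \emph{surface}; the inclusion $\FV{R_i}\subseteq\FV{R}$ controls neither the occurrence count nor the surface status. Concretely, rule $(!b)$ duplicates its argument (the $\FV{\cdot}$ set is unchanged, but the count of $z$ doubles if $z\in\FV{N}$), and rule $(m)$ erases a branch ($\FV{R_i}\subseteq\FV{R}$ still holds---your ``not worsened'' reading permits this---yet $z$ would then occur zero times). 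These scenarios are in fact impossible, but for a reason you state only for registers: in a valid term, every occurrence of a \emph{linearly-abstracted variable} is surface, hence no such variable can occur under a bang or inside a branch of $\meas{-}{-}{-}$, so it is neither copied by $(!b)$ nor dropped by $(m)$. You need this observation for variables, not just registers, and correspondingly your substitution lemma must deliver more than $\FV{\cdot}$-inclusion: for every captured variable, the property ``exactly one free occurrence, and it is surface'' must be preserved from $R$ to $R_i$. With that strengthening the induction on the context closes; without it, step (2) does not follow from what you have established.
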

Notice that the  restriction of  $\redq$ to surface contexts is necessary to respect the linearity of quantum computation, avoiding duplication or deletion of qubits.

\paragraph{Examples.}Let us see the definitions at work in a few
examples. We first formalize the recursive program from
\Cref{ex:coin_cont}.  Recall that $H$ is the Hadamar gate, and
$I\defeq \lam x.x$.
\begin{example}[Flipping the quantum coin]\label{ex:coin_formal}
  The program in \Cref{ex:coin_cont} can be written as
  $\et \defeq \pair{~\ket{}}{\Delta ! \Delta}$, where $\ket{}$ is just
  the empty memory and
  $\Delta \defeq \blam x . \meas{H \new}{I}{x ! x}$.  A reduction from
  $\et$ behaves as follows. At every reduction step, we underline the
  redex.
  \begin{flalign*}
    \mdist {\pair{~\ket{}}{\underline{\Delta ! \Delta}}}
    &\Red
      \mdist{\pair{~\ket{}}{\meas{U_H \underline{\new}}{I}{\Delta ! \Delta}}} 
      \Red \mdist{\pair{~ \ket{0}}{\meas{\underline{U_H r_0}}{I}{\Delta ! \Delta}}}  \\
    &	\Red \mdist{\pair{~ \frac{\sqrt{2}}{2}(\ket{0}+\ket 1)}{\underline{\meas{ r_0}{I}{\Delta ! \Delta}}}}  \Red
      \mdist{\two \pair{\ket{}}{I} , \two \pair {\ket{}}{\Delta ! \Delta}}\\
    &\Red \dots  \Red \mdist{\two \pair{\ket{}}{I} , \frac{1}{4} \pair {\ket{}}{I},  \frac{1}{4} \pair {\ket{}}{\Delta ! \Delta}
       } \Red \dots
  \end{flalign*}
  Notice that the first step is a non-linear $\beta$ reduction. The reduction of
  $\new$ allocates a fresh qubit in the memory, corresponding to the
  register $r_0$. The redex $U_H r_0$ applies the Hadamar gate $H$ to
  that qubit. The last reduction performs \emph{measurement}, yielding
  a probabilistic outcome.
\end{example}

\begin{example}[Entangled pair]\label{ex:entangled}
  Let
  $\et \defeq \pair{\ket{}}{\letin{\tuple{x,y}}{U_\CNOT \tpair{U_H
        \new}{\new}}{\meas{y}{I}{I}x}}$ (where
  $\leti \tuple{x,y} \dots$ is sugar for an opportune encoding). This
  program produces an entangled pair of qubits (notice how $\CNOT$ is
  applied to a pair of registers) and then measures one of the
  qubits. Let us formalize its behaviour:
  \begin{flalign*}
    \mdist{ \et}&\sRed^* \mdist{\pair{\hq \otimes \ket 0}{\letin{\tuple{x,y}}{
                  \underline{U_\CNOT \tpair{r_0} {r_1}}}{\meas{y}{I}{I}x}} }\\
		& \sRed \mdist{\pair { \frac{\sqrt{2} }{2} \ket{00} + \frac{\sqrt 2}{2} \ket {11} }
                  {\letin{\tuple{x,y}}{\tpair{r_0} {r_1}}{\meas{y}{I}{I}x}} }\\
		& \sRed^* \mdist{\pair { \frac{\sqrt{2} }{2} \ket{00} + \frac{\sqrt 2}{2} \ket {11} }
                  {{\meas{r_1}{I}{I} r_0}} }
                  \sRed \mdist{\two \pair { \ket{0}}{I r_0} , \two \pair { \ket{1}}{I r_0}   }
  \end{flalign*}
\end{example}

\begin{figure}
  {\footnotesize 
    \begin{minipage}[c]{0.6\textwidth}
      $
      \dfrac{}{\mdist{\et}\Red \mdist{\et}} \quad 
      \dfrac{\et\red\m}{\mdist{\et}\Red \m} \quad   
      \dfrac{(\mdist{\et_i} \Red \m_i)_{\iI} }{ \mdist{p_{i}\et_{i}\mid i\in I} \Red  \sum_{\iI} {p_i \!\cdot \m_i}} 
      $  	
      \caption{Lifting of $\red$}\label{fig:lifting}
    \end{minipage}}
\end{figure}

\subsection{Surface Reduction and Surface Normal Forms}

So far, we have defined a very liberal  notion of reduction, in which $\beta$ is  unrestricted---it can validly  be performed even inside a $!$-box.
What shall we adopt as evaluation strategy?

In the setting of calculi based on linear logic, as Simpson's calculus \cite{Simpson05} and the Bang calculus \cite{EhrhardG16}, the natural candidate is 
 \textbf{surface reduction}: the  restriction of \emph{beta} to surface contexts ($\sredx{\beta}$)  plays  a    role akin  to that of  head reduction  in classical $\lambda$-calculus, yielding to similar factorization and normalization results  which relate $\redx{\beta}$ and $\sredx{\beta}$  (as recalled in \Cref{tab:fact-lc}).
 The \emph{terms of interest} are   here 
 \textbf{surface normal forms}  (\snf), such as $x$ or $!M$.  They are the analog of values in Plotkin's Call-by-Value $\lam$-calculus and of head normal forms in classical $\lam$-calculus---such an analogy can indeed  be made precise \cite{EhrhardG16,GuerrieriManzonetto18, BucciarelliKRV23}\footnote{A consequence  of Girard's translation of Call-by-Name and Call-by-Value $\lam$-calculi into Linear Logic.}.

In our setting, \emph{surface reduction} and \emph{surface normal forms}(\snf) also play a privileged role.

%

\paragraph{Surface Reduction.} 
Surface steps   $\sred \subseteq \Programs \times \MQ$ (\Cref{fig:surface_steps}) are 
the union  $ \redq \cup  \sredb$ of quantum steps together with $ \sredb $, \ie  the closure  \emph{under  surface  contexts} $\ss$ of the  $\beta$ rules.
A program $\et$ is a  \textbf{surface normal form} (\snf) if $\et \not \sred$,  \ie   no surface reduction is possible from  it. 

A $\red$-step which is not surface is noted $\nsred$. 
The  lifting  of $\sred,\nsred$ to relations  on multidistributions is  denoted $\sRed, \nsRed$ respectively.
\begin{remark}
Notice  that  $\nsred$ steps do not act on the qubits state,  since  they are  $ {\beta} $ steps.
\end{remark}

\paragraph{Strict Lifting.} To guarantee normalization results (\Cref{sec:Asymptotic}), we will need a stricter form of lifting,
noted $\sfull$ (\Cref{fig:full_lifting}),  forcing  a reduction step to be  performed in each program of the multidistribution $\r$, if a redex exists. Clearly $\sfull \subseteq \sRed$.

%
%
\begin{figure}[h]\centering
	\fbox{	
		{\footnotesize 
			\begin{minipage}[c]{0.6\textwidth}
				{	$
					\dfrac{\et\not \sred}{\mdist{\et}\sfull \mdist{\et}} \quad 
					\dfrac{\et\sred\m}{\mdist{\et}\sfull \m}  \quad   
					\dfrac{(\mdist{\et_i} \sfull  \m_i)_{\iI} }{ \mdist{p_{i}\et_{i}\mid i\in I}\sfull  \sum_{\iI} {p_i \!\cdot \m_i}} 
					$}\caption{Strict lifting of $\sred$ }\label{fig:full_lifting}
	\end{minipage}}}
\end{figure}
%

\begin{example}\label{ex:full_normalizing} We will prove that  the strict lifting $\sfull$  guarantees to reach  \snf, if any exist.   This is obviously not the case for     $\sRed$-sequences:
{\footnotesize \begin{equation*}
	\mdist{\two I\new, \two \underline{(\blam x.x!x)  !(\blam x.x!x)}}
	\sRed \mdist{\two I\new, \two \underline{(\blam x.x!x)  !(\blam x.x!x)}}
	\sRed \mdist{\two I\new, \two \underline{(\blam x.x!x)  !(\blam x.x!x)}} \sRed\dots
\end{equation*}}
\end{example}

\paragraph{On the Interest of Surface Normal Forms.}
%
What is the result of running a quantum program?  In general, since
computation is probabilistic, the result of executing a program will
be a distribution over some outcomes of interest. A natural choice are
programs of shape $\et \defeq \pair{\Q}{S}$, with $S$ in surface
normal form, ensuring that at this point, the qubits state $\Q$ is a
\emph{stable piece of information} (it will not further evolve in the
computation).  Indeed:
\begin{center}
  a program $\et\not\sred$ (i.e. in \snf) will no longer modify the qubits
  state.
\end{center}

\begin{remark} Notice instead that a
	program  $\et \not\redx{q}$  (no quantum step is possible) 
	is not necessarily done  in manipulating the quantum memory.
	Further $\beta$ reductions may unblock further quantum steps. 
	Think of $ (\blam x.  \CNOT \tuple{Hx, x})( !\new)  $
	from  \Cref{ex:duplicating}.
\end{remark}

\subsection{Sum-up Tables}

Let us  conclude the section summarizing the reduction relations at play.

\paragraph{Relations}\hfill\\
\begin{tabular}{|c|c|c|c|}
	\hline
	$ \States \times \MQ $	& Definition &Lifted to $ \MQ \times \MQ $  & Strict lifting  
	 \\
	\hline
	
	$\redb$	&   contextual closure  of $\beta$-rules & $\Redb$ &  \\ 
	\hline
	$\sredb$	&   closure by \textbf{surface context} of $\beta$-rules & $\sRedb$ &  $\xfullx{\surf}{\beta}$    \\
	\hline
	$\redq $	&   closure by \textbf{surface context} of $q$-rules & $\Redq$  &  $\xfullx{}{q} $    \\
	\hline
	$\red$	&  $\redb \cup \sredq$ & $\Red$ &  \\ 
	\hline
	$\sred$	&  $\sredb \cup \sredq$ & $\sRed$ &  $\sfull$    \\
	\hline
	$\nsred$	&  $\red - \sred$ & $\nsRed$ &     \\
	\hline			
\end{tabular}

\paragraph{Reduction Sequences}\hfill\\
\begin{tabular}{|c|c|}
	\hline
	Finite reduction sequence&   \\
	\hline
	$\m  \Red^* \n$	&  there is a $\Red$-sequence from $\m$ to $\n$    \\
	\hline
	$\m  \sRed^* \n$	&  there is a $\sRed$-sequence from $\m$ to $\n$    \\
	\hline
	$\m  \sfull^* \n$	&  there is a $\sfull$-sequence from $\m$ to $\n$    \\
	\hline
	
\end{tabular}



\renewcommand{\a}{a}	
\renewcommand{\c}{c}

\section{Rewriting Theory for $\QQ$: Overview of the Results}\label{sec:overview}
We are now  going to study   reduction on multidistributions of programs, 
namely the \emph{general 
reduction} $\Red$ (corresponding to the lifting of $\red$) and \emph{surface  reductions} 
(corresponding to the  lifting of $\sred$), and the  relation between the two. Let us discuss each point.

\begin{enumerate}
	\item The reduction $\Red$ allows for \emph{unrestricted} $\beta$ reduction.  For example, we can rewrite  in the scope of a Bang  operator  $!$ (perhaps to  optimize the thunked  code before copying it several times).
	We  prove that  $\Red$ is confluent, providing a general framework for  rewriting theory. This (very liberal) reduction has a  foundational role, in which to study the equational theory of the calculus and   to analyze programs transformations.

	\item Surface reduction    $ \sRed \subseteq \Red$ plays the role of  an  evaluation strategy, in which however 
	 the  scheduling (how redexes should be fired) is  not  fully specified\footnote{This is not only convenient, as it allows  for parallel implementation, but it is necessary for standardization \cite{FaggianR19}}.  For example  $\et = \pair{\ket{}}{\tuple{\new , H \new}}$ has two surface redexes,  enabling two  different steps.  We will prove (by proving a diamond property) that surface  reduction ($ \sfull$) is  "essentially deterministic" in the sense that while the choice of the redex to fire is 
	 non-deterministic, the order in which such choices are performed are irrelevant to the final result.

\item The two reductions are related by a standardization result (\Cref{thm:sfactorization}) stating that 
if 
		$\m \Red^* \n$ then $\m \sRed^*\cdot \nsRed^* \n$.
Standardization  is the base of normalization results, concerning properties such as "program $\et$ terminates with probability $p$."

\item We prove that $\sfull$ is a \emph{normalization strategy} for $\Red$, namely 
 if  $\et$ \emph{may} converge to surface normal form  with probability $p$ using the general reduction $\Red$, then  $ \sfull$ reduction  \emph{must}  converge to surface normal form  with probability $p$.  Informally, we can write 
that
	$	\m \Down p    \text{ implies } \m \sDown p$ (corresponding   to the last line in \Cref{tab:fact-lc}).
 To formalize and prove such  a claim  we will need more tools, because 
 probabilistic  termination is  \emph{asymptotic}, \ie  it appears as a limit of a possibly infinite reduction. We treat this in \Cref{sec:Asymptotic}, where we rely on techniques from \cite{AriolaBlom02, Faggian22, FaggianGuerrieri21}.
\end{enumerate}

\section{Confluence and Finitary Standardization} 
We first recall standard notions  which we are going to use.
\paragraph{Confluence, Commutation, and all That (a quick recap).}


The relation $\red$  is \emph{confluent} if  
$ {\revred}^{*}\cdot {\red}^{*}{~\subseteq~}{\red} ^{*}\cdot\, {\revred}^{*}. $
%
A stricter form is the diamond $\revred \cdot \red \text{ implies } \red \cdot\revred$, which is well known to imply confluence.
Two relations $\redL{}$ and $\redM{}$  on $\AA$
\emph{commute} if : \quad
$	\revredL{}^{*} \cdot \redM{}^{*} \text{ implies } \redM{}^{*} \cdot\revredL{}^{*}$.
Confluence and factorization are both commutation properties: a relation is confluent if it commutes with itself.

An element $u \in \mathcal A$ is  a $\red$-\emph{normal form}  if there is no $t$ such that 
$u\red t$ (written $u\not\red$).

\noindent 
\emph{On normalization.}
In general, a term may or may not reduce to a normal form. 
And if it does, not all reduction sequences necessarily lead to normal form. 
How do we compute a normal form? This is the problem tackled by \emph{normalization}:  by repeatedly performing \emph{only specific  steps},  a normal form will be computed, provided that $t $ can reduce to any.
Intuitively, a  \emph{normalizing strategy}  for $\red$ is a reduction strategy which, given a term $t$, is guaranteed to reach normal form, if any exists.

\subsection{Surface Reduction has the Diamond Property}

In this section, we  first prove that surface reduction ($\sRed$ and  $\sfull$) has   the   \emph{diamond property}:
\begin{equation*}
	\r \revsRed\m \sRed\s \text{ implies  }     \r  \sRed \n  \revsRed \s  ~ (\text{for some } \n)  \tag{Diamond}
\end{equation*}
 then we  show that $\Red$ is confluent.

Here we adapt techniques used in probabilistic rewriting  \cite{AvanziniLY20,  Faggian19, FaggianR19}. Proving the diamond property is  however significantly 
  \emph{harder}  than in the case of  probabilistic $ \lambda $-calculi, because we need to take into account also the  \emph{qubits state}, and the corresponding registers.
If a program $\et=\pair \Q M$ has two different reductions, we need to join in one step not only the terms, but also their qubits states,
working up to re-indexing of the registers (recall that programs  are equivalence classes modulo re-indexing, see also 
  \Cref{ex:reindexing}). The following is an example, just using the simple construct $\new$. 
Measurement makes the situation even more delicate.

\begin{example}
  Let $\et = \pair{\ket{}}{\tuple{\new , (H \new)}}$. The
  following are two different reduction sequences form $\et$. The two
  normal forms are the same program (\Cref{def:programs}).  Here,
  $\ket+\defeq \frac{\sqrt{2}}{2}(\ket 0 +\ket 1)$.
  
  \noindent
  \scalebox{.9}{\begin{minipage}{1.1\textwidth}
  \begin{flalign*}
    \pair{\ket{}} {\tuple{\underline{\new},  (H \new)}}
    & 
      \sred  \mdist{\pair {\ket 0} {\tuple{r_0,  (H \underline{\new})}} }\sRed  
      \mdist{\pair {\ket 0 }   {\tuple{r_0,  (\underline{H r_1})}}} \sRed
      \mdist{ \pair {\ket 0 \otimes \ket+}  {\tuple{r_0,  r_1}} }
    \\
    \pair{\ket{}} {\tuple{\new,  (\underline{H \new})}}
    & 
      \sred  \mdist{\pair {\ket 0}  {\tuple{\new,  (\underline{H r_0})}} }   \sRed 
      \mdist{\pair {\ket+}   {\tuple{\underline{\new},  ( r_0)} }}
      \sRed \mdist{\pair {\ket+\otimes \ket 0} {\tuple{r_1,  r_0}}  }
  \end{flalign*}
\end{minipage}}
\end{example}

The key result is the following version of diamond (commutation).  The
proof---quite technical---is given in the \Appendix. 
  Recall that $\sfull \subseteq \sRed$.
\begin{lemma}[Pointed Diamond]\label{lem:diamond}
  Assume $\et=\pair \Q M$ and that $M$ has two distinct redexes, such that
  $\et \sredx{b} \m_1 $ and $\et \sredx{c} \m_2$.  Then there exists
  $\n$ such that $\m_1 \sfullx{c} \n$ and $\m_2 \sfullx
  {b}\n$. Moreover,  no term $M_i$ in
  $\m_1 = \mdist{p_i\pair{\Q_i}{M_i}}_{\iI}$ and no term $M_j$ in
  $\m_2=\mdist{p_j\pair{\Q_j}{M_j}}_{\jJ}$ is in \snf. \qed
\end{lemma}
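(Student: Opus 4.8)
The plan is to establish the commutation by a case analysis on the \emph{relative position} of the two redexes fired by $b$ and $c$ inside $M$, together with their \emph{kind} ($\lam$-redex, $\blam$-redex, or one of the quantum rules of \Cref{fig:rules}). Two distinct redex occurrences in a term are either \emph{disjoint} or one is \emph{nested} inside a proper subterm of the other; for the rules at hand these are the only configurations, as none of the $\beta$- or $q$-patterns can properly overlap (registers occur at most once by \Cref{def:validity}, and a quantum redex has fully evaluated register arguments). The single observation driving the whole argument is that a \emph{surface} context never enters the argument $!N$ of a $(!b)$-redex nor the branches of a measurement; hence the residual of a surface redex can never lie in a duplicable or erasable position. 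Consequently, firing either step leaves \emph{exactly one} residual of the other—never duplicated, never erased—which is precisely what makes the closure a genuine diamond with a single $\sfull$-step on each side, and what yields the \textbf{Moreover} clause: each component of $\m_1$ (resp. $\m_2$) still contains a residual surface redex and is therefore not in \snf.

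For the \textbf{nested} case I would argue that, up to symmetry, the outer step is necessarily a $\beta$-step (a quantum redex has fully evaluated register arguments and no surface subredex), so it leaves the memory unchanged, and the inner redex sits either in the body of the outer abstraction or in the \emph{linear} argument of a $\lam$-redex (the $\blam$-argument being a box, hence non-surface). The two steps then commute by the standard commutation of substitution with reduction, using \Cref{def:validity}: a $\lam$-bound variable occurs exactly once and at a \emph{surface} position, so substituting the argument neither duplicates the other redex nor pushes it under a $!$ or into a branch; its residual is again a surface redex, and the (possibly quantum) inner step performs the same memory operation in either order.

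The \textbf{disjoint} case is where the quantum features make the argument delicate. Writing $M$ as a context with two disjoint holes filled by the two redexes, pure $\beta/\beta$ pairs commute immediately with an unchanged memory. When one or both steps are quantum I must reconcile the \emph{qubits states} and the \emph{register indices}, which in general agree only \emph{modulo re-indexing} (\Cref{def:programs}, \Cref{ex:reindexing}): for two $\new$ allocations fired in opposite orders I would exhibit the transposition $\sigma$ of the two freshly created identifiers witnessing equality of the resulting programs, and similarly a suitable $\sigma$ for the mixed $\new$/unitary and unitary/unitary cases, relying on the fact that allocations and unitaries acting on disjoint registers act on independent tensor factors. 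The genuinely new phenomenon is \textbf{measurement}: firing a measurement first duplicates the surrounding surface context across the two probabilistic outcomes, so the residual of the other redex appears in \emph{both} components of $\m_1$, and the $\sfull$-step of \Cref{fig:full_lifting} must fire it in each; matching the two resulting multidistributions as multisets then reduces to the commutation of operations on distinct qubits, which follows from linearity and the tensor structure of the state, again up to re-indexing.

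I expect the \textbf{main obstacle} to be exactly this last point: organizing the quantum disjoint cases so that the two sides agree as \emph{multisets of programs modulo re-indexing}. Two difficulties compound there—tracking a single consistent permutation of register identifiers through each pair of commuting steps, and handling the combinatorial blow-up when both $b$ and $c$ are measurements (each side producing a $2\times 2$ multidistribution that must be matched component by component). Carrying this re-indexing bookkeeping uniformly through all sub-cases, rather than the case analysis itself, is what makes the proof technical, which is why it is deferred to the \Appendix.
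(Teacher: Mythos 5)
Your proposal is correct and follows essentially the same route as the paper's proof: a case analysis on the pair of redexes (disjoint or nested, $\beta$ or quantum), where the $\beta$ cases are discharged by the linearity and surface constraints of \Cref{def:validity}, and the quantum/quantum disjoint cases by commutation of quantum operations on distinct qubits modulo re-indexing---precisely the technical apparatus the paper develops in its appendix (commutation of allocation, unitaries and projections, plus the probability-coherence identity for two successive measurements). The two delicate points you single out---using the strict lifting $\sfull$ to fire the surviving residual in \emph{both} probabilistic branches created by a measurement, and carrying a consistent permutation of register identifiers through each commuting pair---are exactly the points the paper's proof labors over as well.
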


From the above result we obtain the diamond property.

\begin{prop}[Diamond]\label{prop:diamonds}
  Surface reductions $\sRed$ and  $\sfull$ have the diamond property. \qed
\end{prop}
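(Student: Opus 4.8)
The plan is to lift the single-program statement of \Cref{lem:diamond} to multidistributions, exploiting that both $\sRed$ and $\sfull$ act pointwise over the multiset structure. Recall that $\m = \mdist{p_i \et_i}_{\iI}$ and that, by the lifting rules, a step $\m \sRed \r$ (resp.\ $\m \sfull \r$) decomposes as a family $(\mdist{\et_i} \sRed \r_i)_{\iI}$ with $\r = \sum_{\iI} p_i \cdot \r_i$. So, given a fork $\r \revsRed \m \sRed \s$, I would first split it into per-program forks $\r_i \revsRed \mdist{\et_i} \sRed \s_i$, close each one to some $\n_i$ with $\r_i \sRed \n_i \revsRed \s_i$, and then recombine by setting $\n := \sum_{\iI} p_i \cdot \n_i$. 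A routine sublemma---that $\sRed$ and $\sfull$ are preserved under the weighted sum $\sum_i p_i \cdot (-)$---makes this recombination precise and yields $\r \sRed \n \revsRed \s$.

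It then remains to close the diamond for a single program $\et = \pair{\Q}{M}$, which I would handle by a case analysis on the two steps $\mdist{\et} \sRed \r_i$ and $\mdist{\et} \sRed \s_i$. If either side is a stay-put (reflexive) step, or if the two steps fire the \emph{same} redex---in which case the two outcomes coincide, since a fixed redex reduces deterministically to a fixed multidistribution (measurement included)---the fork is closed at once, taking $\n_i = \r_i$ or $\n_i = \s_i$. The only genuine case is when the two steps fire two \emph{distinct} redexes of $M$; this is exactly the situation covered by \Cref{lem:diamond}, which supplies $\n_i$ with $\r_i \sfullx{c} \n_i$ and $\s_i \sfullx{b} \n_i$. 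Since $\sfull \subseteq \sRed$, these $\sfull$ steps are in particular $\sRed$ steps, so the same $\n_i$ closes the diamond for both relations simultaneously.

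For the $\sfull$ case a further point needs care, and this is where the \emph{moreover} clause of \Cref{lem:diamond} is essential. Under $\sfull$ a program may stay put only when it is in surface normal form, so the stay-put/reduce mismatch cannot arise: if $M$ carries a redex, both forking steps must reduce it. More importantly, closing the diamond with $\sfull$ requires that every program occurring in $\r_i$ and $\s_i$ still carries a redex, since otherwise $\sfull$ could not act on it. The lemma guarantees precisely this: no term in $\m_1 = \r_i$ nor in $\m_2 = \s_i$ is in \snf, so the closing $\sfull$ steps it provides are well-formed.

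I expect the main obstacle to lie not in the combinatorics above---which is bookkeeping once \Cref{lem:diamond} is available---but in making the recombination watertight in the presence of \emph{re-indexing}: programs are equivalence classes modulo permutation of register identifiers (\Cref{def:programs}), so the joined programs $\n_i$ produced on the two divergent sides must be recognized as equal \emph{as programs}, matching both the terms and the qubit states up to a common permutation (cf.\ \Cref{ex:reindexing}). This identification is already absorbed into \Cref{lem:diamond}; what remains is to check that the weighted sum $\sum_{\iI} p_i \cdot \n_i$ is formed over genuinely equal representatives, so that $\r \sRed \n$ and $\s \sRed \n$ hold for one and the same $\n$.
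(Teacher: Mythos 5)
Your proposal is correct and takes essentially the same route as the paper: the paper likewise obtains the proposition by lifting the Pointed Diamond Lemma (\Cref{lem:diamond}) pointwise through the multidistribution structure, with the trivial cases (stay-put and identical redexes) handled by the lifting rules and the \emph{moreover} clause guaranteeing that the closing $\sfull$ steps are genuine reductions. One minor quibble: in the same-redex case under $\sfull$, taking $\n_i = \r_i$ would require $\r_i \sfull \r_i$, which holds only when every program of $\r_i$ is in \snf; the correct (and immediate) fix is to take any common $\sfull$ successor of $\r_i = \s_i$, which exists because $\sfull$ is total.
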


In its stricter form, the diamond property guarantees that the non
determinism in the choice of the redex is \emph{irrelevant}---hence the
reduction $\sfull$ is essentially deterministic. The technical name for this property
	is Newman's \emph{random descent} \cite{Newman}: no matter the choice of the redex, all reduction sequences behave the same way, \ie have the same length, and if terminating, they do so in the same normal form. Formalized by
\Cref{thm:RD}, we use this fact to establish that $\sfull$ is a
normalizing strategy for $\Red$.

\subsection{Confluence of $\Red$}
We modularize the proof of confluence by using a classical technique,
Hindley-Rosen lemma, stating that if $ \Red_{1}$ and $ \Red_{2}$ are
binary relations on the same set $\mathcal R$, then their union
$\Red_{1}\cup \Red_{2}$ is confluent if both $ \Red_{1}$ and
$\Red_{2}$ are confluent, and $ \Red_{1}$ and $ \Red_{2}$ commute.
        
\begin{theorem}
  The reduction $\Red $ satisfies confluence. 
\end{theorem}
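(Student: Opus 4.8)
The plan is to prove confluence of $\Red$ modularly, via the Hindley--Rosen lemma applied to the decomposition $\Red = \Redb \cup \Redq$ (recall that the root relation splits as $\red = \redb \cup \sredq$, so its lifting splits accordingly). This reduces the statement to three obligations: (i) $\Redb$ is confluent; (ii) $\Redq$ is confluent; and (iii) $\Redb$ and $\Redq$ commute. The value of this modularization is that each of (i) and (ii) is governed by a single mechanism---purely classical $\beta$ on one side, the quantum/probabilistic rules on the other---so that the only genuine interaction is confined to (iii).

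For (i), I would observe that a $\redb$ step never touches the qubits state and sends a program to a \emph{Dirac} (singleton) multidistribution; hence $\Redb$ is the pointwise lifting of ordinary $\beta$-reduction on terms, and its confluence follows from the Church--Rosser property of the underlying $\Lambda^!$-style $\beta$-reduction. I would re-establish the latter in the standard Tait--Martin-L\"of style: introduce a parallel $\beta$-reduction on terms contracting linear and non-linear ($b$, $!b$) redexes simultaneously, show it enjoys the diamond property, and transport this through the monadic lifting to $\MQ$. For (ii), I would derive the \emph{diamond} property of $\Redq$ directly from \Cref{lem:diamond}, specialized to the case where both fired redexes $b,c$ are \emph{quantum}. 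Two quantum redexes cannot overlap at the root (their head shapes differ) and, by validity, act on distinct surface registers; so firing one leaves the other intact, and the lemma closes the peak with quantum steps $\m_1 \sfullx{c} \n$ and $\m_2 \sfullx{b} \n$, which are themselves $\Redq$-steps. Diamond implies confluence.

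The crux is (iii), which I would obtain from a one-step \emph{strong commutation}: whenever $\et \Redb \m$ and $\et \Redq \n$, there is $\rr$ with $\m \Redq \rr$ and $\n \Redb^{=} \rr$. Since no $\beta$-redex overlaps a quantum redex at the root, the two contracted redexes of $\et$ are either disjoint or nested, and in every case the steps can be permuted. Two validity-driven facts make this go through. First, because registers are \emph{linear} and \emph{surface}, a $\beta$-step can neither duplicate nor erase a quantum redex---in particular the boxes $!N$, the only subterms $\beta$ may copy, contain no registers. Second, because measurement branches into a multidistribution, a pending $\beta$-redex sitting outside the measured subterm is merely replicated across the summands and is contracted in all of them by a \emph{single} $\Redb$-step, by the lifting rule (the reflexive closure $\Redb^{=}$ covers the degenerate case where that redex lands in a zero-amplitude, hence discarded, branch). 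Strong commutation then lifts to commutation of $\Redb^{*}$ and $\Redq^{*}$ by the standard tiling lemma, and Hindley--Rosen assembles (i), (ii), (iii) into confluence of $\Red$.

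I expect (iii) to be the main obstacle, and within it the delicate bookkeeping is that of the quantum memory and of \emph{register reindexing}: since programs are equivalence classes (cf.\ \Cref{ex:reindexing}), the joining element $\rr$ must be produced only up to a permutation of register identifiers, so each permutation lemma used for $\beta$/quantum steps has to be stated and checked modulo reindexing. The interaction of measurement with $\beta$-steps occurring inside branches or inside $!$-boxes is the place where this is most sensitive, and it is exactly where the surface/linearity constraints of \Cref{def:validity} are needed to guarantee that the residual redexes are uniquely located after each step.
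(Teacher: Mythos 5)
Your proposal is correct and takes essentially the same route as the paper: Hindley--Rosen applied to the decomposition $\Redb \cup \Redq$, with confluence of $\Redb$ inherited from confluence of $\redb$, the diamond property of $\Redq$ obtained from \Cref{lem:diamond} (\Cref{prop:diamonds}), and a commutation argument for the interaction between the two. The only organizational difference is in the commutation step: the paper reuses the pointed diamond lemma for the surface-$\beta$/quantum case and checks commutation of $\Redq$ with $\nsRedb$ separately (in the spirit of \Cref{l:post_o}), whereas you prove one strong-commutation statement covering surface and non-surface $\beta$ uniformly---the content, including the validity/surfacing facts you invoke, is the same.
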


\begin{proof}
  The proof that $\Redb \cup \Redq$ is confluent, is easily obtained
  from \Cref{lem:diamond}, by using Hindley-Rosen Lemma. We already
  have most of the elements: $\Redb$ is confluent: because $\redb$ is;
  $\Redq$ is confluent: because it is diamond (\Cref{prop:diamonds});
  $\Redq$ and $\Redb$ commute: by \Cref{lem:diamond}, we already know
  that $\Redq$ and $\sRedb$ commute, hence we only need to verify that
  $\Redq$ and $\nsRedb$ commute, which is easily done.
\end{proof}

\subsection{Surface Standardization}
We show that any  sequence $\Red^* $ can be factorized as $\sRed^* \cdot \nsRed^*$ 
(\Cref{thm:sfactorization}). 
Standardization
is proved  via  the  modular technique proposed in  \cite{AccattoliFG21}, which in  our notation can be stated as follows:
\begin{lemma}[Modular Factorization  \cite{AccattoliFG21}]	\label{lemma:modular}
	${\Red^*} \subseteq {{\sRed^*}} \cdot {{\nsRed^*}}$ if the following conditions hold:
	\begin{enumerate}
		\item  ${\Redb^*} \subseteq {{\sRedb^*}\cdot {\nsRedb^*}}$, and 
		\item $\nsRedb \cdot  \sRedq  \ \subseteq \  \sRedq \cdot \Redb$.   \qed
	\end{enumerate}
\end{lemma}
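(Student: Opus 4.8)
The plan is to prove this purely as a statement of abstract rewriting, reducing the global factorization $\Red^* \subseteq \sRed^* \cdot \nsRed^*$ to a handful of \emph{local swaps}. The first observation is structural: since every quantum step is surface, we have $\sRed = \sRedb \cup \sRedq$ and, as the paper observes, every non-surface step is a $\beta$-step, so $\nsRed = \nsRedb$. Thus the $q$-fragment factorizes trivially (it has no inessential part), and the entire content of the statement is the \emph{postponement} of $\nsRedb$ after the surface steps $\sRedb$ and $\sRedq$.

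Following the standard postponement method, I would try to establish the local swap $\nsRedb \cdot \sRed \subseteq \sRed^* \cdot \nsRed^*$ and then iterate it to push every inessential step to the tail. Because $\sRed = \sRedb \cup \sRedq$, the premise splits into two cases. In the case $\nsRedb \cdot \sRedb$ the whole two-step sequence lives in the $\beta$-fragment, hence it is a $\Redb^*$-sequence, and hypothesis~(1) rewrites it as $\sRedb^* \cdot \nsRedb^* \subseteq \sRed^* \cdot \nsRed^*$. In the case $\nsRedb \cdot \sRedq$ hypothesis~(2) applies directly: it moves the surface quantum step to the front, yielding $\sRedq \cdot \Redb$, where the trailing $\Redb$ is an arbitrary (possibly non-surface) $\beta$-step; this residual $\beta$-step is again a $\Redb^*$-sequence that can be normalized by~(1). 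So both local swaps are covered by the two hypotheses.

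The assembly of these local swaps into the global factorization is the delicate point, and the one I expect to be the main obstacle. A naive induction fails because the swap in the second case can replace a single $\nsRedb$-step by a \emph{longer} $\Redb$-sequence, so the number of inessential steps need not decrease, and strong postponement (a single leading $\sRed$ step) is unavailable, since~(1) may create several surface steps at once. This is exactly the difficulty that the modular technique of \cite{AccattoliFG21} is designed to absorb: one equips reduction sequences with a well-founded (lexicographic, resp.\ multiset) measure on the relative positions of non-surface steps and checks that each local swap strictly decreases it, guaranteeing that the reorganization terminates in factored form. Concretely, I would verify that conditions~(1) and~(2) match \emph{verbatim} the hypotheses of that abstract result and invoke it as a black box, since the termination argument it packages is precisely the part that does not depend on the quantum calculus and is therefore best imported rather than re-proved.
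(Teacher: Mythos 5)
Your proposal is correct and coincides with the paper's treatment: the lemma is precisely the modular factorization theorem of \cite{AccattoliFG21} instantiated with $\Redb$ and $\Redq$, and the paper likewise imports it without further proof --- the only substantive check being, as you observe, that the quantum fragment is entirely surface, so its own factorization and one of the two linear-swap conditions hold trivially, leaving exactly conditions (1) and (2) as hypotheses. One technical slip that does not affect the verdict: $\sRed$, being the lifting of $\sredb\cup\sredq$, is strictly larger than $\sRedb\cup\sRedq$ (a single lifted step may fire $\beta$-redexes in some components of the multidistribution and quantum redexes in others), so your equality and the ensuing case split hold only up to decomposing such mixed steps, which is possible because the lifting is reflexive, giving $\sRed^*=(\sRedb\cup\sRedq)^*$ and $\Red^*=(\Redb\cup\Redq)^*$.
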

Condition 1. in \Cref{lemma:modular} is immediate consequence of Simpson's surface standardization for the  $\Lambda^!$ calculus
\cite{Simpson05} stating that 
$ 	{\redb^*} \subseteq {{\sredb^*}\cdot {\nsredb^*}}$. 
%
%
Condition 2. in \Cref{lemma:modular} is  obtained from the following \emph{pointed} version:
\begin{lemma}\label{l:post_o} 	
	$\pair \Q M \nsredb  \mdist{\pair \Q P}$ and $ \pair {\Q} P \redq  \n$ implies   $\pair \Q M \redq \cdot \Redb \n$. 
\end{lemma}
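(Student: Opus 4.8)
The plan is to prove this as a \emph{local commutation} (a swap) between a non-surface $\beta$ step and a surface quantum step, by analysing the relative positions of the two contracted redexes. Write the hypothesis as $\pair \Q M \nsredb \mdist{\pair \Q P}$ with $M = \cc\hole{R}$ and $P = \cc\hole{R'}$, where $R$ is the fired $\beta$-redex ($R \rredb R'$) and the hole of $\cc$ is \emph{non-surface} (it sits inside a $!$-box or inside a branch of a $\mathtt{meas}$); the memory $\Q$ is unchanged, as $\nsred$ steps are $\beta$ steps and do not touch the qubits state. Write the quantum step $\pair \Q P \redq \n$ as the contraction of a surface quantum redex, i.e. $P = \ss\hole{Q_0}$ with $\ss$ a surface context and $Q_0$ one of $\new$, $U_A\,r_i$, $U_A\tpair{r_i}{r_j}$, $\meas{r_i}{N_0}{N_1}$.

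The crux is the following observation: since $R$ is non-surface, contracting it alters only material strictly inside a box or a branch, leaving the \emph{surface skeleton} of the term untouched, while the path from the root to the surface position of $Q_0$ is entirely surface. Hence $Q_0$ is already present at the \emph{same surface position} in $M$ (with possibly different branch content, in the measurement case), and no \emph{new} surface quantum redex can be created by the step: every subterm of the contractum $R'$ occupies the same non-surface position as $R$, so it cannot be the surface redex $Q_0$. This simultaneously rules out the configuration in which $R$ is an ancestor of $Q_0$, leaving only two position relationships: either $R$ and $Q_0$ are disjoint, or $Q_0$ is a measurement one of whose branches contains $R$.

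I would then close the three resulting cases; in each, I first perform the quantum step $\redq$ from $\pair \Q M$ (which acts exactly as the one fired in $P$, since it only involves surface material unaffected by $R$, so modulo reindexing it yields the same memory components and probabilities) and then one lifted $\Redb$ step firing the residual(s) of $R$. Case (i): $R$ and $Q_0$ disjoint with $Q_0 \in \{\new,\, U_A\,r_i,\, U_A\tpair{r_i}{r_j}\}$; the quantum step is single-outcome, $R$ survives once, and one $\redb$ step reaches $\n$. Case (ii): $R$ and $Q_0$ disjoint with $Q_0 = \meas{r_i}{N_0}{N_1}$; firing $Q_0$ copies the surrounding surface context into both branches, so $R$ now occurs in \emph{both} resulting programs, and a single $\Redb$ step reducing both components simultaneously restores $\n$. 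Case (iii): $Q_0 = \meas{r_i}{N_0}{N_1}$ is an ancestor of $R$ with $R$ inside a branch (which is precisely a legitimate non-surface position); firing $Q_0$ exposes that branch and one $\redb$ step inside the selected branch reaches $\n$. In every case one lands on exactly the $\n$ produced by firing $Q_0$ in $P$, giving $\pair \Q M \redq \cdot \Redb \n$.

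The main obstacle is case (ii): one must check that firing the measurement before $R$ does not change the result even though it \emph{duplicates} the occurrence of $R$ across the two branches, and that reducing both copies is still a \emph{single} $\Redb$ step---this is exactly what the lifting rule of \Cref{fig:lifting}, with its simultaneous per-component reduction, delivers. A secondary, pervasive point of care is that programs are taken modulo reindexing (\Cref{def:programs}): I must argue that the quantum step fired in $M$ and the one fired in $P$ allocate the same fresh register, apply the same gate, or measure the same register, and hence yield matching memories; this holds precisely because $Q_0$ together with its register arguments is surface and therefore identical in $M$ and $P$.
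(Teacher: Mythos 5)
Your proposal is correct and matches the paper's argument in substance: the paper proves this lemma by induction on the surface context $\ss$ enclosing the quantum redex, exploiting exactly your key observation that a non-surface $\beta$ step leaves the surface skeleton unchanged (``$M$ and $P$ have the same shape''), and your case analysis on redex positions---disjoint redexes, duplication of $R$ under a measurement, $R$ inside a selected branch---is precisely what that induction unfolds into, with the lifted $\Redb$ (including its reflexive and componentwise rules) absorbing the duplicated or discarded residuals. The only presentational difference is that the paper organizes the argument as a structural induction on $\ss$ rather than a direct analysis of the relative positions of the two redexes.
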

\begin{proof}
	By induction on the context $\ss$ such that  $P=\ss\hole R$ and 
	$	{  \pair{\Q}{\ss\hole R} \sredq \mdist{p_i \pair{\Q_i}{\ss\hole{R_i}}} } =\n$.
	We exploit in an essential way the fact  that  $M$ and $P$ have the same shape. \SLV{}{The proof is in the Appendix.} 
\end{proof}
By Lemmas \ref{lemma:modular} and \ref{l:post_o}, we  obtain the main result of this section:
\begin{thm}[Surface Standardization]\label{thm:sfactorization}
	\(\m \Red^* \n    \text{ implies } \m \sRed^*\cdot \nsRed^* \n \) \qed
\end{thm}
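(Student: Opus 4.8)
The plan is to invoke the Modular Factorization Lemma (\Cref{lemma:modular}) and discharge its two hypotheses, delegating all the combinatorial rearrangement of a $\Red^*$ sequence to that abstract result. What then remains is purely to establish its two conditions \emph{at the level of multidistributions}, each of which I would derive by \emph{lifting} a statement that already holds on single programs.

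For condition 1, ${\Redb^*}\subseteq{\sRedb^*}\cdot{\nsRedb^*}$, I would start from Simpson's surface standardization for $\Lambda^!$ \cite{Simpson05}, namely ${\redb^*}\subseteq{\sredb^*}\cdot{\nsredb^*}$ on raw terms, and lift it componentwise. The observation that makes this lifting immediate is that every $\beta$-step produces a \emph{Dirac} multidistribution and leaves the qubits state $\Q$ untouched (cf. the remark that $\nsred$ steps do not act on the memory). Hence a reduction $\m\Redb^*\n$ amounts to advancing each program $\pair\Q{M_i}$ in the support of $\m$ independently along a purely term-level sequence $M_i\redb^* M_i'$, with the reflexive rule of the lifting letting a component ``wait''. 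Applying Simpson to each component yields an intermediate $M_i''$ with $M_i\sredb^* M_i''\nsredb^* M_i'$; collecting these and using the waiting rule to synchronize, I first perform all surface steps, reaching $\mdist{p_i\pair\Q{M_i''}}$, and only then the non-surface ones, which is exactly $\m\sRedb^*\cdot\nsRedb^*\n$.

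For condition 2, $\nsRedb\cdot\sRedq\subseteq\sRedq\cdot\Redb$, the matching single-program statement is precisely the pointed \Cref{l:post_o}. I would lift it using the standard principle of monadic/probabilistic rewriting that a \emph{local} (pointed) commutation lifts to the induced relations on multidistributions \cite{AvanziniLY20,Faggian22}. Concretely, given $\m\nsRedb\m'\sRedq\m''$, I decompose both steps through the lifting rules and run a per-component analysis: a program of $\m$ may (i) be untouched by both steps, (ii) undergo only the non-surface $\beta$-step, (iii) undergo only the quantum step, or (iv) undergo the $\beta$-step and then, as $\pair\Q P$, a quantum step. Cases (i)--(iii) swap trivially, while case (iv) is exactly the hypothesis of \Cref{l:post_o}, giving $\pair\Q M\redq\cdot\Redb\n$. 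Reassembling these componentwise swaps produces the required factorization $\m\sRedq\cdot\Redb\m''$.

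With both conditions in hand, \Cref{lemma:modular} delivers ${\Red^*}\subseteq{\sRed^*}\cdot{\nsRed^*}$, which is the claim. I expect the main obstacle to be the lifting of the pointed swap (condition 2) rather than its pointed statement \Cref{l:post_o}, which is assumed: because $\sRedq$ can branch via the measurement rule $(m)$, splitting a program into two weighted branches, the componentwise swaps must be recombined into genuine multidistributions with matching scalars, and one must check that the single non-surface $\beta$-redex preceding the quantum step is correctly propagated into \emph{every} downstream branch. The Dirac nature of $\beta$ and the fact that non-surface steps never touch the qubits state are what keep this bookkeeping manageable.
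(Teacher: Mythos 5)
Your proposal is correct and follows essentially the same route as the paper: the paper likewise derives \Cref{thm:sfactorization} by invoking the Modular Factorization Lemma (\Cref{lemma:modular}), discharging condition 1 via Simpson's surface standardization for $\Lambda^!$ and condition 2 via the pointed swap of \Cref{l:post_o}, lifted to multidistributions. The componentwise lifting details you spell out (Dirac nature of $\beta$-steps, per-program case analysis, recombining measurement branches) are exactly what the paper leaves implicit.
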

\begin{remark}[Strict vs non-strict]
Please observe that standardization is stated in terms of the  \emph{non-strict} lifting ($\sRed$) of $\sred$, as $\sfull$ could  reduce more than what is desired. Dually, normalization holds in terms of the \emph{strict} lifting $\sfull$, for the reasons already discussed in \Cref{ex:full_normalizing}.
\end{remark}

\paragraph{ A Reading  of Surface Standardization.}
A program $\et$ in \snf will no longer modify the qubits state.
Intuitively, $\et$ has already produced the maximal amount of  quantum data that it could possibly do. 
%
%
We can read Surface Standardization as follows. 
Assume $\mdist{\et} \Red^* \n$ where all terms in $\n$ are in \snf (we use metavariables $S_i,S_i'$ to indicate terms in \snf).
Standardization guarantees that surface steps suffice to reach a multidistribution $\n'$ whose programs have the exact \emph{same information content} as $\n$:
\begin{center}
$\mdist{\et} \Red^* \n=\mdist{p_i\pair{\Q_i}{S_i}}_{\iI}$  \quad implies\quad $\mdist{\et} \sRed^* \n' = \mdist{p_i\pair{\Q_i}{S'_i}}_{\iI}$. 
\end{center}
 This because \Cref{thm:sfactorization} implies 
$\mdist{\et} \sRed^* \n'   \nsRed^* \n$, and 
 from $\n'   \nsRed^* \n$ we  deduce that each element  $p_i\pair{\Q_i}{S_i}$ in $\n$  must come form an element $p_i\pair{\Q_i}{S_i'}$ in $\n'$ where $S_i'$ is in \snf and where the qubits state    $\Q_i$ (and the associated probability $p_i$) are \emph{exactly the same}.



\section{Probabilistic Termination and Asymptotic Normalization}\label{sec:Asymptotic}

%
%
%
%
%
%
%
%
%
%
%
%
%
%
%
%

What does it mean for a program to reach surface normal form (\snf)? Since measurement makes the reduction probabilistic, we need to give a quantitative answer. 
\paragraph{Probabilistic Termination.}

The probability that the system described by the multidistribution  $\m \ =\  \mdist{p_i\pair {\Q_i} {M_i}\st i\in I}$  is in surface normal form is expressed by a scalar $p=\Pr{\m}\in [0,1]$ which is defined  as follows:
\[
\Pr{\m} = \sum_{i\in I}  q_i
\qquad\qquad
q_i=\left\{
\begin{array}{ll}
	p_i & \mbox{if $M_i$ \snf }\\
	0   & \mbox{otherwise}
\end{array}
\right.
\]
Let $\et=\pair{\Q} {M}$ and $\m_0 = \mdist {\et}$. Let
$\m_0\Red \m_1\Red\m_2\Red\cdots $ a reduction sequence.
$ \Pr {\m_k}$ expresses the probability that after $k$ steps $\et $ is
in \snf.  The \emph{probability that $\et$ reaches \snf} along the
(possibly infinite) reduction sequence $\seq \m$ is easily defined as
a limit: $\sup_{n} \{\Pr{\m_n}\}. $ We also say that the sequence
$\seq \m$ \emph{converges} with probability
$ \sup_{n} \{\Pr{\m_n} \} $.

 

\begin{example}[Recursive coin, cont. ]
	Consider again \Cref{ex:coin_formal}. After 4 steps, the program terminates with probability $\two$.
	After 4 more steps, it terminates with probability  $\two + \frac{1}{4}$, and so on.
	At the limit, the reduction sequence \emph{converges} with probability $\sum_{k:1}^\infty \frac{1}{2^k}=1$.
\end{example}

\subsection{Accounting for Several Possible Reduction Sequences}

Since $\Red$ is not a deterministic reduction,  given a multidistribution    $\m$, there are  \emph{several possible reduction  sequences} from $\m$, and therefore several outcomes (limits) are possible. Following \cite{Faggian22}, we adopt the following terminology: 
\begin{Def}[Limits]\label{def:limits} Given $\m$, we write
	\begin{itemize}
			\item 		$ 	\m \Down {\cpoP} $, if  there exists   a $\Red$-sequence $\seq \m$ from  $\m$   whose limit is $\cpoP$.
			
	\item  $\Lim (\m, \Red) \defeq \{\cpoP\mid \m\Down {\cpoP}\}$  is  the set of limits of $\m$.
		
		\item ${\den \m}$ denotes the greatest element of ~ $\Lim (\m, \Red)$, if any exists.
		
	\end{itemize}
	
\end{Def}
Intuitively, $\den \et$  is the \emph{best result} that any $\Red$-sequence from  $\et$ can \emph{effectively} produce.
If the set $\Lim (\et, \Red)$ has a sup $\alpha$ but not a greatest element  (think of the open interval $[0,1)$),  it means  that in fact, no reduction  can produce 
$\alpha$ as a limit. 
Notice also that, when reduction is deterministic, from any $\et$ there is only one maximal reduction sequence, and so it is always the case that 
 $\den \et = \sup_{n} \set{\Pr{\et_n} }$.
Below we exploit the interplay  between different rewriting  relations, and their limit;  it  is useful to  summarize our notations  in   \Cref{fig:limits}.
\begin{figure}
\begin{tabular}{|c|c|}
	\hline
	Convergence (\Cref{def:limits})	&   \\
	\hline
	$\m \Down { \cpoP}$	&  there is a $\Red$-sequence from $\m$ 
	which converges with probability $\cpoP$ \\
	\hline
	$\m \sDown { \cpoP}$	&  there is a $\sRed$-sequence from $\m$ 
	which converges with probability $\cpoP$  \\
	\hline
	$\m  \sfullDown {\cpoP}$	& there is a $\sfull$-sequence from $\m$ 
	which converges with probability $\cpoP$   \\
	\hline
\end{tabular}
\caption{Limit of (possibly infinite) reduction sequences}
\label{fig:limits}
\end{figure}

%

\subsection{Asymptotic Normalization}
%
%

Given a quantum program $\et$, does  $\den \et$ exists? If this is the case, can we define a \emph{normalizing strategy}  which is guarantee to converge to  $\den \et$?
The answer is positive. The main result of this section is that such a normalizing strategy does exist, and it is $\sfull$.
More precisely, we show that 
\emph{any} $\sfull$-reduction sequence  from $\et$ converges to the same limit, which is exactly  $\den \et$.
We establish the following results, for any arbitrary $\m \in \MQ$.
\Cref{thm:RD} is a direct---and the most important---consequence of
the diamond property of $\sfull$. The proof uses both point 1. and
point 2. of \Cref{lem:diamond}. For \Cref{thm:complete}, the proof
relies on an abstract technique from \cite{FaggianGuerrieri21}.
\begin{theorem}[Random Descent]\label{thm:unique} \label{thm:RD}
  All $\sfull$-sequences from $\m$ converge to the same limit.\qed
\end{theorem}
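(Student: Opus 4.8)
The plan is to derive this as a consequence of the diamond property of $\sfull$ (\Cref{prop:diamonds}), strengthened into the form of Newman's \emph{random descent} already flagged in the text. The key observation is that a diamond property in its \emph{strict/pointed} form does more than guarantee confluence: it guarantees that \emph{all} maximal reduction sequences from a given starting point are ``parallel moves'' of one another, so they share the same length and, crucially here, the same sequence of probability-of-normal-form values. Concretely, I would first establish the one-step exchange property: if $\m \sfull \r$ and $\m \sfull \s$ with $\r \neq \s$, then by the diamond there is $\n$ with $\r \sfull \n$ and $\s \sfull \n$. This is exactly \Cref{lem:diamond} lifted to multidistributions, and it is the engine of the whole argument.

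Next I would prove the quantitative consequence of the diamond, namely that $\Pr{\cdot}$ is \emph{monotone and step-synchronized} along $\sfull$. The delicate point---and where point 2 of \Cref{lem:diamond} is needed---is that a $\sfull$-step never reduces a program already in \snf (by definition of the strict lifting in \Cref{fig:full_lifting}, a program in \snf is left untouched) and, by the ``no $M_i$ in $\m_1$ is in \snf'' clause of \Cref{lem:diamond}, a single diamond step does not prematurely create normal forms on one branch that the other branch has not yet matched. This lets me argue that for the two one-step reducts $\r,\s$ and their common reduct $\n$, one has $\Pr{\r} = \Pr{\s}$, and more generally that the function $k \mapsto \Pr{\m_k}$ is \emph{independent of the chosen $\sfull$-sequence}, step by step. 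I would formalize this by a standard tiling/induction on the length of a finite ``difference'' between two sequences: given two $\sfull$-sequences $\seqf{\m}{n}{\Nat}$ and $\seqf{\m'}{n}{\Nat}$ from $\m$, repeated application of the diamond fills in a grid whose rows and columns agree on $\Pr{\cdot}$ at each depth.

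Finally I would pass to the limit. Since $\Pr{\m_k}$ is non-decreasing in $k$ (normal forms, once reached on a branch, persist, because $\sfull$ leaves \snf programs fixed), each $\sfull$-sequence converges to $\sup_k \Pr{\m_k}$; and since the value at every finite depth $k$ is common to all $\sfull$-sequences, the suprema coincide. Hence every $\sfull$-sequence from $\m$ converges with the same probability. The main obstacle I expect is not the limit passage, which is routine monotone convergence, but the careful bookkeeping in the tiling step: I must ensure that the diamond can be applied ``in sync'' across two arbitrary sequences of differing local choices, i.e. that the grid construction terminates and that the step-wise equality $\Pr{\m_k} = \Pr{\m'_k}$ is preserved rather than drifting. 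This is precisely what the \emph{pointed} formulation of \Cref{lem:diamond}---with its guarantee that the joining step $\sfullx{c}$ and $\sfullx{b}$ are themselves strict and produce no spurious \snf---is designed to supply, so the proof reduces to invoking that lemma uniformly and checking the synchronization invariant is maintained.
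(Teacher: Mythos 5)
Your proposal is correct and follows essentially the same route as the paper: Theorem~\ref{thm:RD} is obtained there too as a direct consequence of the pointed diamond (Lemma~\ref{lem:diamond}), using its joinability clause for the tiling of two $\sfull$-sequences and its no-\snf clause to keep $\Pr{\m_k}$ synchronized across sequences at every depth, before passing to the limit by monotonicity of $\Pr{\cdot}$ under the strict lifting. The synchronization invariant you single out as the delicate point is exactly why the paper states the lemma in pointed form and remarks that both parts of it are needed.
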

\begin{theorem}[Asymptotic completeness] \label{thm:complete}
    $\m\Down {\cpoP}$ implies $\m\sfullDown {\cpoQ}$,  with  
    $\cpoP \leq \cpoQ$.\qed
\end{theorem}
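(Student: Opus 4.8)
The plan is to bound the arbitrary $\Red$-limit $\cpoP$ by the $\sfull$-limit $\cpoQ$, exploiting that $\cpoQ$ is already pinned down. First I would invoke Random Descent (\Cref{thm:RD}): every $\sfull$-sequence from $\m$ converges to one and the same value, so I may fix any $\sfull$-sequence $\m = \n_0 \sfull \n_1 \sfull \cdots$, set $\cpoQ \defeq \sup_k \Pr{\n_k}$, and reduce the whole statement to proving the single inequality $\cpoP \leq \cpoQ$. Next I would record two monotonicity facts about the observable $\Pr$. A program in \snf has no surface redex, so under $\Red$ it can only perform non-surface $\beta$-steps, and such steps neither touch the qubit state nor alter the surface structure; hence \snf\ is $\Red$-absorbing and $\Pr$ is non-decreasing along any $\Red$-sequence. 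In particular, if $\m = \m_0 \Red \m_1 \Red \cdots$ is the sequence witnessing $\m \Down \cpoP$, then $\cpoP = \sup_n \Pr{\m_n}$. It therefore suffices to show $\Pr{\m_n} \leq \cpoQ$ for every $n$.

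For the finitary core, fix $n$ and apply Surface Standardization (\Cref{thm:sfactorization}) to the finite reduction $\m \Red^* \m_n$, obtaining $\L$ with $\m \sRed^* \L \nsRed^* \m_n$. Since each $\nsred$ step is a non-surface $\beta$-step, it leaves the qubit states untouched and preserves the \snf-status of every program (a surface redex outside the boxes and branches is unaffected by reduction inside them). Consequently $\Pr$ is invariant along $\nsRed$, so $\Pr{\L} = \Pr{\m_n}$ (only $\geq$ is needed). The problem thus becomes: any $\sRed^*$-reachable multidistribution $\L$ satisfies $\Pr{\L} \leq \cpoQ$.

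The crux is converting this finite, possibly \emph{lazy} surface reduction $\m \sRed^* \L$ into a statement about the \emph{strict} strategy $\sfull$. Here I would prove a strip/completion lemma from the Pointed Diamond (\Cref{lem:diamond}) and the diamond property (\Cref{prop:diamonds}): a finite $\sRed^*$ reduction can be permuted and completed into a $\sfull$-reduction that dominates it, yielding some $k$ with $\m \sfull^* \n_k$ and $\Pr{\n_k} \geq \Pr{\L}$. Combined with the fact that $\Pr$ is non-decreasing along $\sfull$ — once a program reaches \snf\ the strict lifting never reduces it again, so its mass persists — this gives $\cpoQ = \sup_k \Pr{\n_k} \geq \Pr{\L} = \Pr{\m_n}$. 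Taking the supremum over $n$ delivers $\cpoP \leq \cpoQ$, as required. This packaging is precisely the abstract asymptotic-completeness technique of \cite{FaggianGuerrieri21}, whose hypotheses I would check to hold in our setting: random descent of the strategy (\Cref{thm:RD}) and finitary factorization for an observable unaffected by the non-surface part (\Cref{thm:sfactorization} together with the $\nsRed$-invariance of $\Pr$ established above).

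The main obstacle is exactly this last step: ruling out that the lazy, non-deterministic scheduling of $\sRed$ accumulates \snf-mass that the eager, parallel $\sfull$ would miss. The quantitative form of the diamond property is what makes laziness harmless, but the bookkeeping is heavy because one must track the probabilistic branching introduced by measurement and argue everywhere modulo re-indexing of registers (recall \Cref{ex:reindexing}); joining two surface reductions requires reconciling not only the terms but also their qubit states up to permutation. Isolating the completion lemma so that it feeds cleanly into the abstract framework, rather than re-deriving the limit construction by hand, is where I expect the real technical work to lie.
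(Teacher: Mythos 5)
Your proposal is correct and follows essentially the same route as the paper: it reduces the claim to the surface setting via \Cref{thm:sfactorization} together with the invariance of $\Pr{\cdot}$ under $\nsRed$ (the paper's ``reading of surface standardization''), and then closes the gap between the lazy $\sRed$ and the strict $\sfull$ using the diamond/random-descent machinery (\Cref{lem:diamond}, \Cref{prop:diamonds}, \Cref{thm:RD}) packaged as the abstract asymptotic-completeness technique of \cite{FaggianGuerrieri21}, which is precisely the technique the paper invokes. The only cosmetic slip is writing the completion step as landing on the fixed sequence's element $\n_k$ rather than on some $\sfull$-reduct, but random descent makes this immaterial.
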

\Cref{thm:complete} states that, for each $\m$, if $\Red$ reduction \emph{may} converge to \snf  with probability $p$, then  $ \sfull$ reduction  \emph{must}  converge to \snf with probability (at least) $p$. 
\Cref{thm:unique} states that, for each $\m$, the limit $q$ of strict surface reductions ($\sfull$) from $\m$ is \emph{unique}.

Summing-up, the  limit
 $q$ of $\sfull$ reduction is the best convergence result that any sequence from $\m$ can produce. Since $\sfull \subseteq \Red$, then  $q$ is also the greatest element in $ \Lim (\m, \Red) $, \ie $\den \m = q$.
We hence have proved the following,  
 where item (2.) is the \emph{asymptotic analogue} of the normalization results in \Cref{tab:fact-lc}.
\begin{theorem}[Asymptotic normalization]\label{thm:main_normalization}
  For each $\et\in \Programs$, (1.) the limit $\Lim (\et,
  \Red)$ has a greatest element $\den \et$, and (2.) $\et \sfullDown {\den  \et}$. \qed
\end{theorem}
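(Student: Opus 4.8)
The plan is to derive \Cref{thm:main_normalization} as a direct corollary of the two theorems immediately preceding it, \Cref{thm:RD} and \Cref{thm:complete}, together with the inclusion $\sfull \subseteq \sRed \subseteq \Red$. The statement has two parts: (1.) the set of limits $\Lim(\et,\Red)$ has a greatest element, and (2.) this greatest element is reached by a $\sfull$-sequence, namely $\et \sfullDown{\den\et}$. I would establish (2.) first and then read off (1.).

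First I would fix an arbitrary $\et$ and set $\m = \mdist{\et}$. By \Cref{thm:RD} (Random Descent), all $\sfull$-sequences from $\m$ converge to a \emph{single} common limit; call it $q$. In particular $q$ is well-defined and independent of the scheduling, so $\m \sfullDown{q}$ holds for every $\sfull$-sequence. Since $\sfull \subseteq \Red$, any $\sfull$-sequence is also a $\Red$-sequence, and hence $q \in \Lim(\et,\Red)$, which shows $\m \Down{q}$.

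The core step is to show that $q$ is in fact the \emph{greatest} element of $\Lim(\et,\Red)$. Take any $\cpoP \in \Lim(\et,\Red)$, i.e. $\m \Down{\cpoP}$. By \Cref{thm:complete} (Asymptotic completeness), $\m \Down{\cpoP}$ implies $\m \sfullDown{\cpoQ}$ with $\cpoP \le \cpoQ$. But by \Cref{thm:RD} every $\sfull$-sequence converges to the one fixed value $q$, so necessarily $\cpoQ = q$; therefore $\cpoP \le q$. As $\cpoP$ was an arbitrary element of $\Lim(\et,\Red)$, the value $q$ is an upper bound of the limit set, and since $q$ itself lies in $\Lim(\et,\Red)$ (shown above), it is the greatest element. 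Setting $\den\et \defeq q$ gives exactly the greatest element required by (1.), and $\m \sfullDown{q} = \sfullDown{\den\et}$ gives (2.).

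The reasoning is essentially a ``sup is attained'' argument that closes cleanly only because the two input theorems are perfectly dovetailed: completeness supplies an upper bound via some $\sfull$-limit, and random descent collapses \emph{all} $\sfull$-limits to a single value, so the upper bound and the attained value coincide. I do not expect a genuine obstacle here, since the substantive work has been discharged in \Cref{lem:diamond}, \Cref{thm:RD}, and \Cref{thm:complete}; the only point requiring care is the bookkeeping around the convergence notation of \Cref{fig:limits}—making sure that ``converges with probability $q$'' for the unique $\sfull$-limit is the same quantity being compared in the $\cpoP \le \cpoQ$ inequality of \Cref{thm:complete}, so that transitivity of $\le$ applies without ambiguity.
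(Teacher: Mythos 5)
Your proposal is correct and follows essentially the same route as the paper: the paper's own argument (the ``Summing-up'' paragraph preceding the theorem) likewise combines \Cref{thm:RD} to get a unique $\sfull$-limit $q$, \Cref{thm:complete} to show $q$ bounds every element of $\Lim(\et,\Red)$, and the inclusion $\sfull \subseteq \Red$ to conclude $q \in \Lim(\et,\Red)$, hence $\den\et = q$. Your extra care about identifying the $\cpoQ$ from completeness with the unique Random-Descent limit is exactly the glue the paper uses implicitly.
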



\section{Related  Work and Discussion}

In this paper, we propose a foundational notion of (untyped) quantum
$\lam$-calculus with a general reduction, encompassing the full
strength of $\beta$-reduction while staying compatible with quantum
constraints.  We then introduce an evaluation strategy, and derive
standardization and confluence results. We finally discuss
normalization of programs at the limit.

\paragraph{Related Works.}
For quantum $\lambda$-calculi \emph{without measurement}, hence
without probabilistic behavior, \emph{confluence}
\cite{LagoMZ09,ArrighiD17}  (and even a special form of
standardization \cite{LagoMZ09}) have been studied since early work.
When dealing with measurement, the analysis is far more
challenging. To our knowledge, only confluence has been studied, in
pioneering work by Dal Lago, Masini and Zorzi
\cite{LagoMasiniZorzi}.
Remarkably, in order to deal with probabilistic and asymptotic
behavior, well before the advances in probabilistic rewriting of which
we profit, the authors introduce a very 
elaborated technique.  Notice that in \cite{LagoMasiniZorzi} reduction
is non-deterministic, but restricted to \emph{surface reduction}. In
our paper, their result roughly corresponds to the diamond property of
$\sRed$, together with \Cref{thm:RD}.

No ``standard'' \emph{standardization} results (like the classical
ones we recall in \Cref{tab:fact-lc}) exist in the literature for the
quantum setting.  Notice that the form of standardization in
\cite{LagoMZ09} is a reordering of the (surface)
\emph{measurement-free} reduction steps, so to perform first beta
steps, then quantum steps, in agreement with the idea that a quantum
computer consists of a classical device ‘setting up’ a quantum
circuit, which is then fed with an input.  A similar refinement is
also possible for the corresponding fragment of our calculus (namely
measurement-free $\sred$), but clearly does not scale: think of
$(\lam x.x)\meas{U_H\, \new}{M}{N}$, where the argument of a function is
guarded by a measurement.

Our term language is  close to \cite{LagoMasiniZorzi}.  How such a 
calculus relate with a Call-by-Value $\lambda$-calculus such as
\cite{SelingerValiron}?  A first level of answer is that our setting
is an \emph{untyped} $\lambda$-calculus; linear abstraction, together
with well forming rules, allows for the management of quantum data. In
\cite{SelingerValiron}, the same role is fulfilled by the (Linear
Logic based) \emph{typing system}.

Despite these differences, we do expect that our results can be
transferred.  As already mentioned, the redex $(\blam x . M) !N$
reflects a Call-by-Push-Value mechanism, which in \emph{untyped form}
has been extensively studied in the literature with the name of
\emph{Bang calculus} \cite{EhrhardG16,GuerrieriManzonetto18,
  BucciarelliKRV23}, as a uniform framework to encode both
Call-by-Name (CbN) and Call-by-Value
(CbV).
Semantical but also syntactical properties, including
\emph{confluence} \cite{EhrhardG16,GuerrieriManzonetto18} and
\emph{standardization} \cite{FaggianGuerrieri21, diligence24} are
analyzed in the Bang setting, and then transferred via \emph{reverse
  simulation} to both CbV and CbN.  More precisely, a CbV
(resp. CbN) translation maps forth-and-back weak (resp. head)
reduction into surface reduction.  Surface normal forms are the CbV
image of values (and the CbN image of head normal forms).
Since the \emph{Bang calculus} is exactly the fragment of Simpson's
calculus \cite{Simpson05} without linear abstraction, one may
reasonably expect that our calculus can play a similar role in the
quantum setting. It seems however that a back-and forth translation of
CbV (or CbN) will need to encompass types.

A last line of works worth mentioning is the series of works based on
Lineal
\cite{ArrighiD17,arrighi2017vectorial,diaz-caro2019realizability}. However,
these works differ from our approach in the sense that the
$\lambda $-terms themselves are subject to superposition: the distinction
between classical and quantum data in an untyped setting is unclear.


\bibliography{biblio}

\appendix
\section*{APPENDIX}

\section{Convention for Garbage Collection.}

In the definition of programs, we use the convention that the size of
the memory is exactly the number of registers manipulated in the
term. The memory will grow when new qubits are allocated, and shrink
when qubits are read (see \Cref{fig:rules}): the reduction perform
garbage collection on the fly.

If this makes it easy to identify identical programs, it makes the
proofs a bit cumbersome. We therefore rely for them on an equivalent
representation, where a program can have spurious qubits, as long as
they are not \emph{entangled} with the rest of the memory---i.e. when
measuring them would not change the state of the registers manipulated
by the term. So for instance, in this model
$\pair{\ket0\otimes\ket\psi}{r_1}$ is the same as $\pair{\phi}{r_0}$.

\section{Technical properties.}

In all proofs we  freely use the following   closure property, which is  immediate by  definition of context and  surface context.
\begin{fact}[Closure]\label{fact:closure}
	\begin{align*}
		\infer[1.]{  \pair{\Q}{\ss\hole M} \redx{q} \mdist{p_i \pair{\Q_i}{\ss\hole{M_i}}} }
		{ \pair \Q M \redx{c}
			\mdist{p_i \pair {\Q_i} {M_i}} }   
		\quad\quad
		\infer[2.]{\pair \Q{\cc \hole {M} }\redb \mdist{\pair \Q {\cc\hole
					{M'} }}}{\pair \Q M\redb \mdist{\pair \Q {M'}}}
	\end{align*}
	Surface closure (point 1.) also holds with $\redb$ in place of $\redq$.
\end{fact}

We will also use the following   lemma (analog to substitutivity in  \cite{Barendregt}, p.54) The proof is straightforward.
\begin{lemma}[Substitutivity]\label{lem:usefullforproof1}\label{lem:subs1}
	Assume $\pair{\Q}{P}\in \Programs$  and
	$ 	 \pair{\Q}{P} \red \mdist{p_i\pair{\Q_i}{P_i}}   $.
	Then for each term $ N $  :
	$ 	\pair{\Q}{P \subs x N} \red \mdist{p_i \pair{\Q_i}{P_i \subs x N}}  $
\end{lemma}


The converse also holds, and it is simply \Cref{fact:closure}, that can be reformulated as follows
\begin{fact}\label{lem:usefullforproof2}
	Assume $ \pair{\Q}{N} \in \Programs$,  
	$ \pair{\Q}{N} \redx{q} \mdist{p_i \pair{\Q_i}{N_i}} $ 
	and $P$ a term such that  $x$ is \emph{linear} in $P$. Then 
	$ 	\pair{\Q}{P \subs x N} \redx{q} \mdist{p_i \pair{\Q_i}{P \subs x {N_i} }}  $
\end{fact}

\paragraph{Surface Reduction} has a prominent role.
We spell-out the definition.
\begin{figure}[h]\centering
	{\scriptsize 
		
		\textbf{Surface		Reduction Step $\sred$}\\[4pt]	
		$ \sred~:=~ \sredb \cup \sredq$\\[4pt]
		
			\begin{tabular}{|c| c|}
				\hline
				\textbf{Surface	Beta Step $\sredb$}  &  \textbf{ (Surface) q-Step  $\sredq$}\\[4pt]
				
				\infer{\pair \Q{\ss \hole {M} }\sredb \mdist{\pair \Q {\ss\hole
							{M'} }}}{\pair \Q M\mapsto_{\beta} \mdist{\pair \Q{M'}}}
				&
				\infer{  \pair{\Q}{\ss\hole M} \sredq \mdist{p_i \pair{\Q_i}{\ss\hole{M_i}}} }
				{ \pair \Q M \mapsto_{q}
					\mdist{p_i \pair {\Q_i} {M_i}}} \\
				
				\hline
				
			\end{tabular}				
			\caption{Surface Reduction Steps }\label{fig:surface_steps}
	}
\end{figure}
	


\section{Surface Reduction has the Diamond Property }

We obtain the diamond property  (\Cref{prop:diamonds})   from the pointed diamond,  result using the following technique (from \cite{FaggianR19}) , which  allows us to \emph{work pointwise}.

		\begin{lemma*}[pointwise Criterion (FaggianRonchi19)]\label{lem:pointwise}Let $\red_a,\red_b \subseteq \Programs\times \MQ$ and $ \Red_a, \Red_b$ their   lifting. 
			To prove that $ \Red_a, \Red_b$ diamond-commute, \ie 
			\begin{center}
				If $\et \Red_b \m_1$ and $\et \Red_a \m_2, $ then  $\exists\r$ s.t. $\n \Red_a \r$ and $\s \Red_b \r$.
			\end{center}
			it suffices to prove the 
			property (\#) below  (stated in terms of a single  program $\et$)
			\begin{center}
				(\#)	 If $\et\red_b \m_1$ and $\et\red_a \m_2, $ then  $\exists\r$ s.t. $\n \Red_a \r$ and $\s \Red_b \r$.
			\end{center}
			The same result holds with $\full$ in place of $ \Red$.
		\end{lemma*}

		The criterion  together with \Cref{lem:diamond} (Point 1.) yields

\begin{prop*}[\ref{prop:diamonds}] Surface reduction $\sRed$ has the diamond property. The same holds for     $ \sfull$.
\end{prop*}

\SLV{}{

\subsection*{Proof of the Pointed Diamond}\label{app:diamond_proof}

	\begin{lemma*}[Pointed Diamond, \Cref{lem:diamond}]
	Assume  $\et=\pair \Q M$  and $M$ has two distinct redexes, such that $\et \sredx{b} \m_1 $ and $\et \sredx{c} \m_2$. 
	Then:
	\begin{enumerate}[A.]
		\item exists $\n$ such that $\m_1\sfullx{c} \n$ and $\m_2\sfullx {b}\n$;
		
		\item no term $M_i$ in $\m_1 = \mdist{p_i\pair{\Q_i}{M_i}}_{\iI}$ and no term $M_j$ in $\m_2=\mdist{p_j\pair{\Q_j}{M_j}}_{\jJ}$ is a surface normal form.
	\end{enumerate}
\end{lemma*}

\begin{proof}
	The proof technique is to go by induction on the term $M$ such that $\et=\pair{\Q}{M}$ and then reasonning if the reduction was quantum or beta, if needed. \\
	\begin{itemize}
		\item M = $x$ ; M = $r_i$ ; M =$U_A$ ; M = $\new$ ; M = $!M'$ \\
		These terms has only zero or one redexes. $\lightning$
		\item M = $\lam x.M'$ ; M = $\blam x.M'$ \\
		By Induction Hypothesis, we have the result for $\pair{\Q}{M'}$. Finally, we have the result for $\et$ by surface closure (\Cref{fig:steps}).
		\item Case $M=\meas{M_0}{M_1}{M_2}$ \\
		$\et$ does two surface reduction by hypotheses. By definition, surface redexes are in $M_0$. \\ 
		Then, by Induction Hypothesis, we have the result for $\pair{\Q}{M_0}$. Finally, we have the result for $\et$ by surface closure (\Cref{fig:steps}).
		\item Case $M=M_1M_2$ \\
		If both redexes are in $M_1$ (or $M_2$), by Induction Hypothesis, we have the result for $\pair{\Q}{M_1}$ (or $\pair{\Q}{M_2}$). Finally, we have the result for $\et$ by surface closure (\Cref{fig:steps}).
	\end{itemize}
	We therefore focus on the cases $M=M_1M_2$ which are not included above, thus we reason by case analysis on the two reduction. \\ 
	We distinguish three possibilities :
	\begin{itemize}
		\item[1.] $b=\beta$ and $c = q$
		\item[2.] $b = c = q$
		\item[3.] $b = c = \beta$ 
	\end{itemize}
	The third case is known by Simpson's results \cite{Simpson05}. \\
	Then we are focusing on the first and second case
	\begin{enumerate} 
		\item  $\et$ does one quantum reduction and one beta reduction. \\
		We distinguish three cases, depending if $M=M_1M_2$ is the $\beta$-redex or not.
		\begin{itemize}
			\item $M=M_1M_2$ is the $\beta$-redex, and the $q$-redex occurs in $M_1$. \\
			Assume $M_1=\blam x. P$ (case $M_1=\lam x. P$ is similar). Then 
			\begin{itemize}
				\item  $\et={\pair \Q {(\blam x.P)N}} \sredb \mdist{\pair \Q {P \subs x N}}$
				\item   $\et={\pair \Q {(\blam x.P)N}} \redq  \mdist{p_i\pair {\Q_i} {(\blam x.P_i)N}} $ where 
				$ \pair{\Q}{P} \red \mdist{p_i\pair{\Q_i}{P_i}} $
			\end{itemize}		
			By 	\Cref{lem:subs1}, $ \pair \Q {P \subs x N}  \redq \mdist{p_i \pair{\Q_i}{P_i \subs x N}}$. \\
			On the other hand, $\mdist{p_i\pair {\Q_i} {(\blam x.P_i)N}}   \sfullb \mdist{p_i \pair{\Q_i}{P_i \subs x N}}$.

			\item $M=M_1M_2$ is the $\beta$-redex, and the $q$-redex occurs in $M_2$. \\
			Assume $M$ has shape $(\lam x.P)N$ (otherwise it would be a $!$-abstraction, and $M_2$ would of the form $!M_2'$, and the $q$-redex won't be able to occur in $M_2$). \\
			$P$ contains a single occurrence of $x$ (since $\lambda$ is a linear abstraction), which is surface, and so $P=\ss\hole x$. Thus, we have :
			\begin{itemize}
				\item  $\et={\pair \Q {(\lam x.P)N}} \redb \mdist{\pair \Q {\ss\hole N}}$
				\item   $\et={\pair \Q {(\lam x.P)N}} \redq  \mdist{p_i\pair {\Q_i} {(\lam x.P){N_i}}} $ where 
				$\pair \Q N\redq \mdist{ p_i\pair {\Q_i}{N_i}}$.
			\end{itemize}
			By surface closure (\Cref{fig:steps}), ${\pair \Q {\ss\hole N}}\redq \mdist{p_i\pair {\Q_i} {\ss\hole {N_i}}}$.
			On the other hand, since $P=\ss\hole x$, then $ \mdist{p_i\pair {\Q_i} {(\lam x.P){N_i}}}  \sfullb \mdist{p_i\pair {\Q'} {\ss\hole {N_i}}}  $.

			\item Otherwise, $M_1=\ss'\hole{R_1}$ and $M_2=\ss''\hole{R_2}$, with $R_1,R_2$ the two redexes. \\
			By Induction Hypothesis and by surface closure (\Cref{fig:steps}), we have our result for $\et$.
		\end{itemize}
		
		\item $\et$ does two quantum reduction with two different redexes. \\ 
		Necessarily, $M_1=\ss'\hole{R_1}$ and $M_2=\ss''\hole{R_2}$, and we assume  $|\Q|=n$. \\
		We examine the different case, for $R_1,R_2\in\{\new, U_A \re{i} , \meas{\re{i}}{R_{i1}}{R_{i2}}\}$.
		\begin{enumerate}
			\item $M_1=\ss'\hole{\new}$ and $M_2=\ss''\hole{\new}$. \\
			Then, invoking \Cref{not:permut-rew}, we have by hypothesis :
			\begin{itemize}
				\item $\et=\pair{\Q}{\ss'\hole{\new} M_2} \sredq
				\mdist{ \pair{\nu (\Q)}{\ss'\hole{\re{n}} M_2} }=\m_1$
				\item $\et=\pair{\Q}{M_1 \ss''\hole{\new}} \sredq \mdist{
					\pair{\nu (\Q)}{M_1 \ss''\hole{\re{n}}}
				}=\m_2$.
			\end{itemize}
			Hence :
			\begin{itemize}
				\item $\m_1 = \pair{\nu (\Q)}{\ss'\hole{r_n} \ss''\hole{\new}} \sred
				\mdist{ \pair {(\nu \circ \nu) (\Q)}{\ss'\hole{\re{n}} \ss''\hole{\re{n+1}}}  } =\n_1 $
				\item $\m_2 = \pair{\nu (\Q)}{\ss'\hole{\new} \ss''\hole{\re{n}}} \sred
				\mdist{ \pair{(\nu \circ \nu) (\Q)}{\ss'\hole{\re{n+1}}\ss''\hole{\re{n}}}
				}=\n_2$
			\end{itemize}
			We conclude by observing that $\n_1 =\n_2$ because of equivalence defined in \Cref{def:programs}.
			
			\item $M_1= \ss'\hole{\U{A} \, \re{i}}$ and	$M_2= \ss''\hole{\new}$. \\
			Then, invoking \Cref{not:permut-rew}, we have by hypothesis :
			\begin{itemize}
				\item $ \et = \mdist{
					\pair{\Q}{\ss'\hole{\U{A} \, \re{i}} M_2} } \sredq \mdist{
					\pair{\pepqun{A}{i} \Q)}{\ss'\hole{\re{i}}M_2}
				}=\m_1$
				\item $\et = \mdist{
					\pair{\Q}{M_1 \ss''\hole{\new}}} \sredq
				=\mdist{\pair{\nu(\Q)}{M_1 \ss''\hole{\re{n}}}} =\m_2$
			\end{itemize}
			Hence :
			\begin{itemize}
				\item $\m_1 = \pair{\pepqun{A}{i} \Q}{\ss'\hole{\re{i}} \ss''\hole{\new}} \sred
				\mdist{\pair
					{(\nu \circ \pepqun{A}{i}) \Q}{\ss'\hole{\re{i}} \ss''\hole{\re{n}}}  }
				=\n_1 $
				\item $\m_2 = \pair{\nu (\Q)}{\ss'\hole{\U{A} \, \re{i}} \ss''\hole{\re{n}}} \sred \mdist{ \pair {(\pepqun{A}{i} \circ \nu) \Q}{\ss'\hole{\re{i}}\ss''\hole{\re{n}}}}=\n_2$
			\end{itemize}
			From \Cref{lem:U-swap}, we conclude that $\n_1=\n_2$.
			
			\item $M_1=\ss'\hole{\U{A} \, \re{i}}$ and $M_2=\ss''\hole{\U{B} \, \re{j}}$. \\
			From \Cref{def:validity}, we have $i\neq j$. Then, invoking \Cref{not:permut-rew}, we have :
			\begin{itemize}
				\item $ \et = \mdist{
					\pair{\Q}{\ss'\hole{\U{A} \, \re{i}} M_2} } \sredq \mdist{
					\pair{\pepqun{A}{i} \Q)}{\ss'\hole{\re{i}} M_2}
				}=\m_1$
				\item $\et = \mdist{
					\pair{\Q}{M_1 \ss''\hole{\U{B} \, \re{j}}}} \sredq
				=\mdist{\pair{\pepqun{B}{j} \Q}{M_1 \ss''\hole{\re{j}}}} =\m_2$
			\end{itemize}
			Hence : 
			\begin{itemize}
				\item $\m_1 = \pair{\pepqun{A}{i} \Q}{\ss'\hole{\re{i}} \ss''\hole{\U{B} \, \re{j}}} \sred
				\mdist{ \pair
					{(\pepqun{B}{j} \circ \pepqun{A}{i}) \Q}{\ss'\hole{\re{i}} \ss''\hole{\re{j}}}  }
				=\n_1 $
				\item $\m_2 = \pair{\pepqun{B}{j} \Q}{\ss'\hole{\U{A} \, \re{i}} \ss''\hole{\re{j}}} \sred
				\mdist{ \pair {(\pepqun{A}{i} \circ \pepqun{B}{j}) \Q}{\ss'\hole{\re{i}} \ss''\hole{\re{j}}}
				}=\n_2$
			\end{itemize}
			From \Cref{lem:U-swap}, we conclude that $\n_1=\n_2$.
			
			\item $M_1=\ss'\hole{\new}$ and $M_2=\ss''\hole{\meas{\re{i}}{M}{N}}$. \\
			By invoking \Cref{not:permut-rew}, we have :
			\begin{itemize}
				\item $ \et = \mdist{
					\pair{\Q}{\ss'\hole{\new}M_2} } \sredq \mdist{
					\pair{\nu(\Q)}{\ss'\hole{\re{n}}M_2}
				}=\m_1$
				\item$\et = \mdist{
					\pair{\Q}{M_1 \ss''\hole{\meas{\re{i}}{M}{N}}}} \sredq
				\mdist{\prob{0}{i}{\Q}\pair{\pep{0}{i} \Q}{M_1 \ss''\hole{M}},\prob{1}{i}{\Q}\pair{\pep{1}{i} \Q}{M_1 \ss''\hole{N}}}$
			\end{itemize}
			We denote the last multi-distribution as $\m_2$. \\
			Hence, 
			\begin{itemize}
				\item $\m_1 \sred
				\mdist{\prob{0}{i}{\nu(\Q)} \pair
					{(\pep{0}{i} \circ \nu) \Q}{\ss'\hole{\re{n}} \ss''\hole{M}} , \prob{1}{i}{\nu(\Q)} \pair
					{(\pep{1}{i} \circ \nu) \Q}{\ss'\hole{\re{n}} \ss''\hole{N}}  }
				= \n_1 $
			\end{itemize}
			\begin{itemize}
				\item[$\star$] $\pair{\pep{0}{i} \Q}{M_1\ss''\hole{M}} = \pair{\pep{0}{i} \Q}{\ss'\hole{\new} \ss''\hole{M}} \sred
				\mdist{ \pair {(\nu \circ \pep{0}{i}) \Q}{\ss'\hole{\re{n}} \ss''\hole{M}}
				}= \n_{21}$ 
				\item[$\star$] $\pair{\pep{1}{i} \Q}{M_1\ss''\hole{N}} = \pair{\pep{1}{i} \Q}{\ss'\hole{\new} \ss''\hole{N}} \sred
				\mdist{ \pair {(\nu \circ \pep{1}{i}) \Q}{\ss'\hole{\re{n}} \ss''\hole{N}}
				}= \n_{22}$.
			\end{itemize}
			From \Cref{lem:U-swap}, we conclude that $\n_1 = \prob{0}{i}{\Q} \!\cdot \n_{21} + \prob{1}{i}{\Q} \!\cdot \n_{22}$.
			
			\item $M_1=\ss'\hole{\U{A} \, \re{j}}$ and
			$M_2=\ss''\hole{\meas{\re{i}}{M_{21}}{M_{22}}}$. \\
			From \Cref{def:validity}, we have $i\neq j$. Then, invoking \Cref{not:permut-rew}, we have :
			\begin{itemize}
				\item $\et = \mdist{
					\pair{\Q}{\ss'\hole{U_A\,r_j}}M_2} \sredq
				=\mdist{\pair{\pepqun{A}{j}\Q}{\ss'\hole{r_j}}M_2} =\m_1$
				\item $ \et = \mdist{
					\pair{\Q}{M_1\ss''\hole{\meas{\re{i}}{M_{21}}{M_{22}}}}
				} \sredq$
				$\mdist{
					\prob{0}{i}{\Q}\pair{\pep{0}{i} \Q}{M_1\ss''\hole{M_{21}}},
					\prob{1}{i}{\Q}\pair{\pep{1}{i} \Q}{M_1\ss''\hole{M_{22}}}
				}$
			\end{itemize}
			We denote the last multidistribution $\m_2$. \\
			Hence :
			\begin{itemize}
				\item $\m_1\sred
				\mdist{
					\prob{0}{i}{\pepqun{A}{j}\Q}\pair {(\pep{0}{i} \circ \pepqun{A}{j} )(\Q)}{\ss'\hole{r_j}\ss''\hole{M_{21}}},
					\prob{1}{i}{\pepqun{A}{j}\Q}\pair {(\pep{1}{i} \circ \pepqun{A}{j})(\Q)}{\ss'\hole{r_j}\ss''\hole{M_{22}}},
				}$ \\
				We denote this last multidistribution $\n_1$.
			\end{itemize}
			\begin{itemize}	
				\item[$\star$] $\pair{\pi_0^i\Q}{\ss'\hole{\U{A} \, \re{j}}\ss''\hole{M_{21}}}
				\sredq
				\pair{(\pepqun{A}{j} \circ \pep{0}{i})(\Q)}{\ss'\hole{r_j}\ss''\hole{M_{21}}}
				=\n_{21} $
				\item[$\star$] $\pair{\pi_1^i\Q}{\ss'\hole{\U{A} \, \re{j}}\ss''\hole{M_{22}}}
				\sredq
				\pair{(\pepqun{A}{j} \circ \pep{1}{i})(\Q)}{\ss'\hole{r_j}\ss''\hole{M_{22}}}
				=\n_{22} $
			\end{itemize}
			From \Cref{lem:U-swap}, we conclude that $\n_1 = \prob{0}{i}{\Q} \!\cdot \n_{21} + \prob{1}{i}{\Q} \!\cdot \n_{22}$.
			
			\item $M_1=\ss'\hole{\meas{r_i}{M_{11}}{M_{12}}}$ and
			$M_2=\ss''\hole{\meas{r_j}{M_{21}}{M_{22}}}$. \\
			From \Cref{def:validity}, we have $i\neq j$. Then, invoking \Cref{not:permut-rew}, we have :
			\begin{itemize}
				\item $ \et = \mdist{
					\pair{\Q}{\ss'\hole{\meas{r_i}{M_{11}}{M_{12}}}M_2}
				} \sredq$
				$\mdist{
					\prob{0}{i}{\Q}\pair{\pep{0}{i}\Q}{\ss'\hole{M_{11}}M_2},
					\prob{1}{i}{\Q}\pair{\pep{1}{i}\Q}{\ss'\hole{M_{12}}M_2}
				}$ \\
				We denote this last multidistribution $\m_1$.
				\item $\et = \mdist{
					\pair{\Q}{M_1\ss''\hole{\meas{r_i}{M_{21}}{M_{22}}}}} \sredq$
				$\mdist{
					\prob{0}{j}{\Q}\pair{\pep{0}{j}\Q}{M_1\ss''\hole{M_{21}}},
					\prob{1}{j}{\Q}\pair{\pep{1}{j}\Q}{M_1\ss''\hole{M_{22}}}
				}$ \\
				We denote this last multidistribution $\m_2$.
			\end{itemize}
			Hence,
			\begin{itemize}
				\item $\prob{0}{i}{\Q}\pair{\pi_0^i\Q}{\ss'\hole{M_{11}}\ss''\hole{\meas{r_j}{M_{21}}{M_{22}}}}
				\sred$\\
				$\mdist{
					\prob{0}{j}{\pep{0}{i}\Q}\prob{0}{i}{\Q}
					\pair{(\pep{0}{j} \circ \pep{0}{i})(\Q)}{\ss'\hole{M_{11}}{\ss''\hole{M_{21}}}},
					\prob{1}{j}{\pep{0}{i}\Q}\prob{0}{i}{\Q}
					\pair{(\pep{0}{j} \circ \pep{0}{i})(\Q)}{\ss'\hole{M_{11}}{\ss''\hole{M_{22}}}}
				}$ \\
				We denote this last multidistribution $\m_{11}$. \\
				$\prob{1}{i}{\Q}\pair{\pi_1^i\Q}{\ss'\hole{M_{12}}\ss''\hole{\meas{r_j}{M_{21}}{M_{22}}}}
				\sred$\\
				$\mdist{
					\prob{0}{j}{\pep{1}{i}\Q}\prob{1}{i}{\Q}
					\pair{(\pep{0}{j} \ circ \pep{1}{i})(\Q)}{\ss'\hole{M_{12}}{\ss''\hole{M_{21}}}},
					\prob{1}{j}{\pep{1}{i}\Q}\prob{1}{i}{\Q}
					\pair{(\pep{1}{j} \circ \pep{1}{i})(\Q)}{\ss'\hole{M_{12}}{\ss''\hole{M_{22}}}}
				}$
				We denote this last multidistribution $\m_{12}$.
				\item
				$\prob{0}{j}{\Q}
				\pair{\pep{0}{j}\Q}{\ss'\hole{\meas{r_i}{M_{11}}{M_{12}}}\ss''\hole{M_{21}}}
				\sred$\\
				$\mdist{
					\prob{0}{i}{\pep{0}{j}\Q}\prob{0}{j}{\Q}
					\pair{(\pep{0}{i} \circ \pep{0}{j})(\Q)}{\ss'\hole{M_{11}}{\ss''\hole{M_{21}}}},
					\prob{1}{i}{\pep{0}{j}\Q}\prob{0}{j}{\Q}
					\pair{(\pep{1}{i} \circ \pep{0}{j})(\Q)}{\ss'\hole{M_{12}}{\ss''\hole{M_{21}}}}
				}$ \\
				We denote this last multidistribution $\m_{21}$. \\
				$\prob{1}{j}{\Q}
				\pair{\pep{1}{j}\Q}{\ss'\hole{\meas{r_i}{M_{11}}{M_{12}}}\ss''\hole{M_{22}}}
				\sred$\\
				$\mdist{
					\prob{0}{i}{\pep{1}{j}\Q}\prob{1}{j}{\Q}
					\pair{(\pep{0}{i} \circ \pep{1}{j})(\Q)}{\ss'\hole{M_{11}}{\ss''\hole{M_{22}}}},
					\prob{1}{i}{\pep{1}{j}\Q}\prob{1}{j}{\Q}
					\pair{(\pep{1}{i} \circ \pep{1}{j})(\Q)}{\ss'\hole{M_{12}}{\ss''\hole{M_{22}}}}
				}$ \\
				We denote this last multidistribution $\m_{22}$.
			\end{itemize}
			From \Cref{lem:U-swap}, we conclude that $\m_{11}+\m_{12} = \m_{21}+\m_{22}$.
		\end{enumerate}

	\end{enumerate}
	
\end{proof}

}

\section{Finitary Standardization}

\paragraph{Shape Preservation.}
We recall  a basic but key  property of contextual closure. If a step $\redc$ is obtained by closure under \emph{non-empty context} of a rule $\rredc$, then it preserves the shape of the term.
{We say that $T$ and $T'$ have \emph{the same shape} if  both terms belong to the same production
	(\ie, both terms are an application, an abstraction, a variable, a register,  a term of shape $!P$, $\new$, etc).}

\begin{fact}[Shape preservation]	\label{fact:shape} 
	Assume$  \pair{\Q}{M} \redx{}
	\mdist{p_i \pair {\Q_i} {M_i}}  $, 
	$M=\cc\hole{R}, M_i =\cc\hole {R_i}$ and that the  context   $\cc$ is \emph{non-empty}. Then (for each $i$), $M$ and $M_i$ have the same shape.
\end{fact}

An easy-to-verify consequence   is the following, stating that 
\emph{non-surface steps } ($\nsred$)
\begin{itemize}
	\item do not change the quantum memory
	\item do not change the shape of the terms
\end{itemize}
Notice that the qubit state is \emph{unchanged} by $\nsred$ steps, since it can only be a $ \nsredx{\beta} $ step

\begin{lemma}[Redexes and normal forms preservation]	\label{lem:redex} \label{lem:snf}
	Assume  $\pair \Q M\nsredx{\beta}\, \mdist{\pair \Q {M'}}$.
	\begin{enumerate}

		\item 	$M$ is a redex iff $M'$ is a redex. In this case, either both are $\beta$-redexes, or both $meas$-redexes.
		\item $M$ is $\surf$-normal if and only if  $M'$ is $\surf$-normal.
	\end{enumerate}

\end{lemma}

\paragraph{Proof of  \Cref{l:post_o}}

\begin{lemma*}[\Cref{l:post_o}]
	$\pair \Q M \nsredb  \mdist{\pair \Q P}$ and $ \pair {\Q} P \redq  \n$ implies   $\pair \Q M \redq \cdot \Redb \n$.
\end{lemma*}
\begin{proof}
	By induction on the context $\ss$ such that  $P=\ss\hole R$ and 
	$	{  \pair{\Q}{\ss\hole R} \sredq \mdist{p_i \pair{\Q_i}{\ss\hole{R_i}}} } =\n$.
	We exploit in an essential way the fact  that  $M$ and $P$ have the same shape.

\end{proof}

\section{Asymptotic normalization}
\paragraph{Proof Sketch.}


The proof of  \Cref{thm:complete} relies on an abstract   result from the literature \cite{FaggianGuerrieri21},  which here we reformulate in our setting:
\begin{lemma}[Asymptotic completeness criterion \cite{FaggianGuerrieri21}]\label{lem:comp_criterion}
	Assume
	\begin{enumerate}
		\item[i.]  \emph{$\surf$-factorisation}:  if $\m \Red^* \n$
		then $ \m  \sRed^*\cdot \nsRed^*\n$;
		\item[ii.] \emph{$\neg\surf$-neutrality }:    $\m \nsRed \m'$ implies  
		$\Pr{\m}=\Pr{\m'}$.
	\end{enumerate}	
	Then: 
	\quad \quad
	$\m \Down  {\cpoP }\mbox{ implies }  \m \sDown {\cpoP}$.

\end{lemma}

\begin{proof}[Proof of \ref{thm:complete}] 
	We  establishing   the two items  below, and then compose them.
	\begin{enumerate}
		\item 	$\m\Down {\cpoP}$ implies 	$\m\sDown{\cpoP}$  
		\item $\m\sDown  {\cpoP}$ implies $\m\sfullDown  {\cpoP'}$, with $\cpoP  \leq \cpoP'$
	\end{enumerate}
	Item  (1.)
	holds because  $\sRed$ satisfies both conditions in \Cref{lem:comp_criterion}: point (i.)  holds by  \Cref{thm:sfactorization}, point  (ii.) 
	by \Cref{lem:snf}.
	Item  (2.)		  is immediate.
\end{proof}

\SLV{}{

\section{Working up to re-indexing: notation and technical lemmas}\label{app:reindexing}
In this section we spell-out how to work up to re-indexing.
Recall  that the  rewriting system in \Cref{sec:operational} is defined on the simplest representative of the equivalence classes : the register 0, making it the  subject of each reduction. 

To compare different representative, we  rely on the following technical results.

\begin{notation}
	We summarize here some notation to relieve our next lemmas, referring to \Cref{fig:steps} :
	\begin{itemize}
		\item $\perm{0}$ the 0-projection on a state $\Q$, $\perm{0} (\Q) := \Q_0$
		\item $\perm{1}$ the 1-projection on a state $\Q$, $\perm{1} (\Q) := \Q_1$
		\item $\sigma_i$ the permutation that swaps $i$ to $0$, \\ 
		$\sigma_i (i) := 0$, $\sigma_i (0):= i$ and $\sigma_i (k) := k$ otherwise
		\item $\mu_i$ the permutation that swaps $i$ to the last integer position of a register in a state,\\
		for example if $\Q$ has n+1 qubits, $\mu_i (i) = n$
		\item $\sigma_{i,j}$ the permutation that swaps $i$ to $0$, $j$ to $1$, \\
		$\sigma_{i,j} (i) := 0$, $\sigma_{i,j} (0):= i$, $\sigma_{i,j} (j) := 1$, $\sigma_{i,j} (1):= j$ and $\sigma_{i,j} (k) := k$ otherwise
		\item $Id$ the n-ary identity quantum gate, $I \ket{\phi} := \ket{\phi}$
	\end{itemize}
\end{notation}

Recall that the program  $\pair{\mem{Q}}{M}$ denotes  the same equivalence class as
$\pair{\sigma(\mem{Q})}{\sigma(M)}$, for any permutation $\sigma$. 

\begin{lemma}\label{lem:permut-rew} Let $A$ be a \textbf{unary} quantum gate and $B$ be a \textbf{binary} quantum gate, then we have :
	\begin{itemize}
		\item[(u1)]
		$ \pair{\mem{Q}}{U_{A} r_{i}} \mapsto_{q}
		\mdist{\pair{(\sigma^{-1}\circ(\pqun{A})\circ\sigma)\mem{Q}}{r_{i}}}$
		\item[(u2)]  
		$ \pair{\mem{Q}}{U_{B}\termPairNotation{r_{i}}{r_{j}}  } \mapsto_{q}
		\mdist{\pair{(\sigma_{i,j}^{-1}\circ(\pqdeux{B})\circ\sigma_{i,j})\mem{Q}}{\termPairNotation{r_{i}}{r_{j}}}}$
		\item[(m)]
		$ \pair{Q}{\meas{r_{i}}{M}{N}} \mapsto_{q}
		\mdist{\pair{(\mu_{i}^{-1} \circ \perm{0} \circ \mu_{i}) \mem{Q}}{M},
			\pair{(\mu_{i}^{-1} \circ \perm{1} \circ \mu_{i}) \mem{Q}}{N}}$
	\end{itemize}
\end{lemma}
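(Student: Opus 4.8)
All three cases follow one and the same recipe, rooted in the fact that a program is an equivalence class of raw programs modulo re-indexing (\Cref{def:programs}), so that a reduction step may be read off any convenient representative. The root rules of \Cref{fig:rules} record this reading on the canonical representative, in which the register acted upon is $r_0$ for $(u_1)$, the pair $r_0,r_1$ for $(u_2)$, and the last register for $(m)$. The plan is, in each case, to (i) re-index so that the register(s) of interest are moved onto the canonical slot(s), (ii) fire the matching root rule, and (iii) re-index back, reading the net action on the qubits state as a conjugation of the gate (or projection) by the re-indexing permutation. Throughout I use only that re-indexing acts on states as a linear map, so that chaining three such transformations is computed by plain associativity of map application.

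For $(u_1)$ I take $\sigma=\sigma_i$, the permutation carrying $i$ to $0$. Since $\sigma_i(U_A r_i)=U_A r_0$, \Cref{def:programs} gives the identity of programs $\pair{\Q}{U_A r_i}=\pair{\sigma_i(\Q)}{U_A r_0}$, to which the root rule $(u_1)$ applies and yields $\mdist{\pair{(\pqun{A})\,\sigma_i(\Q)}{r_0}}$. Re-indexing this outcome by $\sigma_i^{-1}$, which sends $r_0$ back to $r_i$, and pushing $\sigma_i^{-1}$ through the linear action gives $\sigma_i^{-1}\bigl((\pqun{A})\,\sigma_i(\Q)\bigr)=(\sigma_i^{-1}\circ(\pqun{A})\circ\sigma_i)(\Q)$, which is exactly the claimed state. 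Case $(u_2)$ is identical with $\sigma_{i,j}$ in place of $\sigma_i$: one checks $\sigma_{i,j}(U_B\termPairNotation{r_i}{r_j})=U_B\termPairNotation{r_0}{r_1}$, applies root rule $(u_2)$, and conjugates by $\sigma_{i,j}$.

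The measurement case $(m)$ runs along the same lines, but since the root rule $(m)$ is stated on the \emph{last} register the re-indexing used is $\mu_i$, carrying $i$ to the last position $n$. From $\pair{\Q}{\meas{r_i}{M}{N}}=\pair{\mu_i(\Q)}{\meas{r_n}{M}{N}}$ the root rule splits the program into its two branches, with states given by the projections $\perm{0},\perm{1}$; re-indexing each branch back by $\mu_i^{-1}$ and composing yields $(\mu_i^{-1}\circ\perm{k}\circ\mu_i)(\Q)$ for $k=0,1$, as claimed. I expect the only genuinely delicate point---and the main thing to keep straight---to be the direction bookkeeping: one must consistently separate the permutation that moves a register into canonical position from its inverse that moves it back, so that the resulting operator is the conjugate $\pi^{-1}\circ(-)\circ\pi$ and not its inverse. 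For $(m)$ one must moreover note that re-indexing merely relabels basis strings and leaves amplitude magnitudes untouched, so the measurement weights $|\alpha_0|^2,|\alpha_1|^2$ carried by the root rule are invariant under re-indexing and attach unambiguously to either representative.
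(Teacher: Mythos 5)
Your proposal is correct and follows essentially the same route as the paper's proof: rewrite the program via the re-indexing permutation ($\sigma_i$, $\sigma_{i,j}$, or $\mu_i$) to place the register(s) in the canonical slot of the root rule, fire the rule, and re-index back by the inverse, reading off the conjugated operator $\pi^{-1}\circ(-)\circ\pi$ on the state. Your additional remark that re-indexing preserves amplitude magnitudes, so the measurement weights attach to either representative, is a point the paper's proof leaves implicit.
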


\begin{proof}
	\begin{itemize}
		\item Proof of (u1)
	\end{itemize}
	Let $A$ be a \textbf{unary} quantum gate. Let $\sigma$ denote $\sigma_i$. \\
	By definition, we have $\pair{\Q}{U_{A}\,r_{i}}$ = $\pair{\sigma \Q}{\sigma (U_{A}\,r_{i})}$ = $\pair{\sigma \Q}{U_{A}\,r_{0}}$. \\
	Then, it can be reduced :
	$\pair{\sigma \Q}{U_{A}\,r_{0}}	\mapsto_{q}
	\mdist{\pair{((\pqun{A})\circ\sigma)\Q}{r_{0}}}$. \\
	Moreover, $\pair{((\pqun{A})\circ\sigma)\Q}{r_{0}}$ = $\pair{(\sigma^{-1}\circ(\pqun{A})\circ\sigma)\Q}{\sigma^{-1}r_{0}}$ = $\pair{(\sigma^{-1}\circ(\pqun{A})\circ\sigma)\Q}{r_{i}}$ \\
	Then, we have : 
	$\pair{\Q}{U_{A}\,r_{i}}	\mapsto_{q}
	\mdist{\pair{(\sigma^{-1}\circ(\pqun{A})\circ\sigma)\Q}{r_{i}}}$

	\begin{itemize}
		\item Proof of (u2)
	\end{itemize}
	Let $B$ be a \textbf{binary} quantum gate. Let $\sigma$ denote $\sigma_{i,j}$.\\
	By definition, we have $\pair{\Q}{(U_{B}\,\tpair {r_{0}}{r_{1}}}$ = $\pair{\sigma \Q}{\sigma (U_{B}\,r_{i}\,r_{j})}$ = $\pair{\sigma \Q}{(U_{B}\,\tpair {r_{0}}{r_{1}}}$. \\
	Then, it can be reduced :
	$\pair{\sigma \Q}{(U_{B}\,\tpair {r_{0}}{r_{1}}}	\mapsto_{q}
	\mdist{\pair{((\pqdeux{B})\circ\sigma)\Q}{\tpair {r_{0}}{r_{1}}}}$. \\
	Moreover, $\pair{((\pqdeux{B})\circ\sigma)\Q}{\tpair {r_{0}}{r_{1}}}$ = $\pair{(\sigma^{-1}\circ(\pqdeux{B})\circ\sigma)\Q}{\sigma^{-1}(\tpair {r_{0}}{r_{1}})}$ = $\pair{(\sigma^{-1}\circ(\pqdeux{B})\circ\sigma)\Q}{\tpair {r_{i}}{r_{j}}}$.
	Then we have :
	$\pair{\Q}{(U_{B}\,\tpair {r_{i}}{r_{j}}}	\mapsto_{q}
	\mdist{\pair{(\sigma^{-1}\circ(\pqdeux{B})\circ\sigma)\Q}{r_{i}\,r_{j}}}$
	
	\begin{itemize}
		\item Proof of (m)
	\end{itemize}
	Let $\mu$ denote $\mu_i$. \\
	By definition, we have $\pair{\Q}{\meas{r_{i}}{M}{N}} = \pair{\mu \Q}{\mu \meas{r_{i}}{M}{N}}$ = $\pair{\mu \Q}{\meas{r_{0}}{\mu M}{\mu N}}$. \\
	Then, it can be reduced :
	$\pair{\mu \Q}{\meas{r_{0}}{\mu M}{\mu N}}	\mapsto_{q}
	\mdist{\pair{(\pi_0 \circ \mu) \Q}{\mu M} , \pair{(\pi_1 \circ \mu) \Q}{\mu N}}$.
	Moreover, $\pair{(\pi_0 \circ \mu) \Q}{\mu M}$ = $\pair{(\mu^{-1}\circ \pi_0 \circ \mu) \Q}{\mu^{-1} \mu M }$ = $\pair{(\mu^{-1}\circ \pi_0 \circ \mu) \Q}{ M }$. \\
	And, $\pair{(\pi_1 \circ \mu) \Q}{\mu N}$ = $\pair{(\mu^{-1}\circ \pi_1 \circ \mu) \Q}{\mu^{-1} \mu N }$ = $\pair{(\mu^{-1}\circ \pi_1 \circ \mu) \Q}{N}$. \\
	Then we have :
	$\pair{\Q}{\meas{r_{i}}{M}{N}} \mapsto_{q}
	\mdist{\pair{(\mu^{-1} \circ \pi_0\circ \mu)\Q}{M},
		\pair{(\mu^{-1} \circ \pi_1 \circ \mu)\Q}{N}}$.
\end{proof}

\begin{notation}\label{not:permut-rew}
	We write $\pepqun{A}{i}$ for
	$\sigma_{i}^{-1}\circ(\pqun{A})\circ\sigma_{i}$ in the first case
	of \Cref{lem:permut-rew} and $\pepqdeux{A}{i}{j}$ for $\sigma_{i,j}^{-1}\circ(\pqdeux{A})\circ\sigma_{i,j}$ in the second case. \\
	In the third case, we write $\pep{b}{i}$ for $\mu_{i}^{-1} \circ \pi_{b} \circ \mu_{i}$. \\
	We write $\rho_b (\Q)$ for the probability associated to the measurement of the 0th qubit on the state Q if the result is b, referring to \Cref{fig:steps}, $\rho_0 (\Q)$ = $|\alpha_0|^2$ and $\rho_1 (\Q)$ = $|\alpha_1|^2$. \\
	We write $\prob{b}{i}{\Q}$ for $\rho_b (\sigma_i \Q)$ , $(b \in \{0,1\})$, the probability associated to the measurement of the i-th qubit on the state Q if the result is b. \\
	We write $\nu$ the operation that initialize a qubit at the zero state,\\
	$\nu (\Q) = \Q \otimes \ket{0}$.
\end{notation}

The next lemma is to ensure that we can retrieve the same state and probabilities, without considering the order of the quantum reductions.

\begin{lemma}\label{lem:U-swap}
	In the configuration of \Cref{lem:permut-rew}, provided that $i\neq
	j \neq k \neq l$ and $b, c \in \{0,1\}$:
	\begin{itemize}
		\item
		$\nu \circ \pepqun{A}{i} = \pepqun{A}{i} \circ \nu$ ($i$ different from the qubit initiated)
		\item
		$\nu \circ \pepqdeux{A}{i}{j} = \pepqdeux{A}{i}{j} \circ \nu$ ($i$ and $j$ different from the qubit initiated)
		\item
		$ \nu \circ \pep{b}{i} = \pep{b}{i} \circ \nu$ ($i$ different from the qubit initiated)
		\item 
		$\pepqun{A}{i} \circ \pepqun{B}{j} = \pepqun{B}{j} \circ \pepqun{A}{i}$
		\item 
		$\pepqun{A}{k} \circ \pepqdeux{B}{i}{j} = \pepqdeux{B}{i}{j} \circ \pepqun{A}{k}$
		\item 
		$\pepqdeux{A}{k}{l} \circ \pepqdeux{B}{i}{j} = \pepqdeux{B}{i}{j} \circ \pepqdeux{A}{k}{l}$
		\item
		$\pepqun{A}{i} \circ \pep{b}{j} = \pep{b}{j} \circ \pepqun{A}{i}$
		\item
		$\pepqdeux{A}{k}{l} \circ \pep{b}{j} = \pep{b}{j} \circ \pepqdeux{A}{k}{l}$
		\item
		$\pep{b}{i} \circ \pep{c}{j}  = \pep{c}{j} \circ \pep{b}{i}$
		\item
		$\prob{b}{i}{\nu \Q} = \prob{b}{i}{\Q}$ ($i$ different from the qubit initiated)
		\item
		$\prob{b}{i}{\pepqun{A}{j} \Q} = \prob{b}{i}{\Q}$
		\item
		$\prob{b}{i}{\pepqdeux{A}{j}{k} \Q} = \prob{b}{i}{\Q}$
		\item
		$\prob{c}{j}{\Q} \times \prob{b}{i}{\pep{c}{j}\Q} = \prob{c}{j}{\pep{b}{i} \Q} \times \prob{b}{i}{\Q}$
	\end{itemize}
\end{lemma}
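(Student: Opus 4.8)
The plan is to reduce every item to a single principle---linear operators acting on disjoint tensor factors of $\qstates^n=(\Cmp^2)^{\otimes n}$ commute---supplemented by the facts that unitaries preserve the norm and that discarding a qubit left in a definite basis state does not affect the marginal statistics of the remaining qubits. The first step is to eliminate the permutations of \Cref{not:permut-rew} by giving an \emph{intrinsic} reading of the three families of operators: conjugating $\pqun{A}$ (which acts on factor $0$) by the transposition $\sigma_i$ yields exactly $A$ applied to the $i$-th factor and $\Id$ elsewhere; likewise $\pepqdeux{A}{i}{j}$ is $A$ embedded into the \emph{ordered} pair of factors $(i,j)$, and $\pep{b}{i}$ is ``project factor $i$ onto $\ket b$ and discard it''. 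Each of these is verified by unfolding the definition and evaluating on a computational basis element $\ket{b_0\cdots b_{n-1}}$, on which the transpositions merely permute coordinates. Once this is done the specific $\sigma_i,\sigma_{i,j},\mu_i$ are irrelevant, and every operator is described purely by \emph{which} qubits it touches.

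With the intrinsic description in hand I would prove the \emph{master fact} by linearity: for two operators with disjoint supports, their action on a basis vector $\ket{b_0\cdots b_{n-1}}$ factors through disjoint coordinate blocks, so the two composites produce the same vector, and linearity extends this to all of $\qstates^n$. Items (4)--(9) are then immediate, since pairwise distinctness of $i,j,k,l$ forces disjoint supports; for (9) one additionally notes that discarding two distinct qubits commutes up to the re-indexing convention of \Cref{def:programs}, consistently with the fact that this whole section works up to re-indexing. Items (1)--(3) are the $\nu$-commutations: because $\nu(\Q)=\Q\otimes\ket0$ appends a fresh factor at the last position $n$ while every index involved is ${<}n$ (this is the side condition ``$i$ different from the qubit initiated''), the $(n{+}1)$-qubit operator $\pepqun{A}{i}$ equals the $n$-qubit one tensored with $\Id$ on the new factor, so $\pepqun{A}{i}(\Q\otimes\ket0)=(\pepqun{A}{i}\Q)\otimes\ket0$; the intrinsic reading is exactly what makes the two occurrences of $\sigma_i$ (on $n$ and on $n{+}1$ qubits) agree on the shared positions.

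For the probability identities (10)--(12) I would write $\prob{b}{i}{\Q}=\norm{\mathsf P^i_b\,\Q}^2$, where $\mathsf P^i_b$ is the \emph{non-discarding} projector of qubit $i$ onto $\ket b$. Then (10) follows from $\mathsf P^i_b(\Q\otimes\ket0)=(\mathsf P^i_b\Q)\otimes\ket0$, while (11)--(12) follow from the master fact together with unitarity: $\mathsf P^i_b$ commutes with a gate acting on the disjoint qubit(s) $j$ (resp. $j,k$), and that gate is norm-preserving, so $\norm{\mathsf P^i_b\,\pepqun{A}{j}\Q}=\norm{\pepqun{A}{j}\,\mathsf P^i_b\Q}=\norm{\mathsf P^i_b\Q}$.

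The main obstacle is item (13), the consistency of \emph{joint} measurement probabilities under reordering: it is genuinely a statement about measurement rather than pure commutation, and is further complicated by the normalizations hidden inside $\pep{c}{j}$, which discards a qubit and renormalizes. My plan is to route everything through the non-discarding projectors $\mathsf P^i_b,\mathsf P^j_c$. Using $\prob{b}{i}{\Q}=\norm{\mathsf P^i_b\Q}^2$ and the observation that discarding a qubit frozen in $\ket c$ leaves the marginal of qubit $i$ unchanged, one gets $\prob{b}{i}{\pep{c}{j}\Q}=\norm{\mathsf P^i_b\,\mathsf P^j_c\,\Q}^2/\norm{\mathsf P^j_c\,\Q}^2$; multiplying by $\prob{c}{j}{\Q}=\norm{\mathsf P^j_c\Q}^2$ telescopes the denominator, so the left-hand side equals $\norm{\mathsf P^i_b\,\mathsf P^j_c\,\Q}^2$. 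The symmetric computation gives the right-hand side $=\norm{\mathsf P^j_c\,\mathsf P^i_b\,\Q}^2$, and the two agree because $\mathsf P^i_b$ and $\mathsf P^j_c$ commute (disjoint qubits, $i\neq j$) by the master fact. The only care required is matching the discarding projectors $\pep{b}{i}$ of the statement with the non-discarding $\mathsf P^i_b$ and checking that the normalization factors cancel exactly; the degenerate case $\norm{\mathsf P^j_c\Q}=0$ is handled separately, as both sides then vanish.
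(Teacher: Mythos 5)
The paper itself contains no proof of \Cref{lem:U-swap}: the lemma closes the appendix section on re-indexing and is simply asserted, as a list of routine linear-algebra facts, with no argument following it. So there is no official proof to compare yours against, and the question is only whether your blind reconstruction is sound. It is. Reducing items (4)--(9) and (11)--(12) to a single ``disjoint-support operators commute'' fact, verified on computational basis vectors and extended by linearity, is the right move, and your intrinsic reading of $\pepqun{A}{i}$, $\pepqdeux{A}{i}{j}$ and $\pep{b}{i}$ correctly discharges the conjugating permutations $\sigma_i$, $\sigma_{i,j}$, $\mu_i$ of \Cref{not:permut-rew}, after which those permutations play no further role. Item (13) is the only genuinely delicate point and you handle it correctly: writing $\prob{b}{i}{\Q}=\norm{\mathsf{P}^i_b\,\Q}^2$ with \emph{non-discarding} projectors, both sides of the identity telescope to $\norm{\mathsf{P}^i_b\,\mathsf{P}^j_c\,\Q}^2$ and $\norm{\mathsf{P}^j_c\,\mathsf{P}^i_b\,\Q}^2$ respectively, which agree by commutation; and you rightly isolate the degenerate case $\mathsf{P}^j_c\,\Q=0$, where $\pep{c}{j}\Q$ is ill-defined (the paper's $\pi_c$ renormalizes) but both products vanish.

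Two points would need tightening in a full write-up. First, item (9): since $\pep{b}{i}$ discards a qubit, the two composites act through states of decreasing size, and the paper's convention $\pep{b}{i}=\mu_i^{-1}\circ\pi_b\circ\mu_i$ (swap with the \emph{last} position, then drop it) means the index of the second measured qubit can move after the first discard; your appeal to ``up to re-indexing'' is the honest fix, but the cleaner statement to verify on basis vectors is that both composites equal ``project qubits $i$ and $j$ onto $\ket{b},\ket{c}$, discard both, renormalize'', with the surviving qubits in the same relative order. Second, in items (1)--(3) and (10), the operator $\pepqun{A}{i}$ on the left and right of $\nu$ are formally different operators (on $n{+}1$ versus $n$ qubits); your parenthetical remark that the larger one is the smaller one tensored with $\Id$ on the fresh factor is exactly what the side condition ``$i$ different from the qubit initiated'' guarantees, and it deserves to be stated as an explicit step rather than an aside. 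Neither point is a gap in the mathematics; both are bookkeeping that the paper itself glosses over by omitting the proof entirely.
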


}

\end{document}